\documentclass[12pt]{article}

\usepackage{amsmath,amssymb,amsthm,verbatim}
\usepackage{url}
\usepackage{xspace}
\usepackage{MnSymbol}
\setlength{\hfuzz}{3pt}
\setlength{\headheight}{32pt}
\setlength{\headsep}{29pt}
\setlength{\footskip}{28pt}
\setlength{\textwidth}{444pt}
\setlength{\textheight}{616pt}
\setlength{\marginparsep}{7pt}
\setlength{\marginparpush}{7pt}
\setlength{\oddsidemargin}{4.5pt}
\setlength{\marginparwidth}{55pt}
\setlength{\evensidemargin}{4.5pt}
\setlength{\topmargin}{-15pt}
\setlength{\footnotesep}{8.4pt}
\numberwithin{equation}{section}

\theoremstyle{plain}
\newtheorem{theorem}{Theorem}
\newtheorem{corollary}[theorem]{Corollary}
\newtheorem{lemma}[theorem]{Lemma}
\newtheorem{proposition}[theorem]{Proposition}
\theoremstyle{definition}
\newtheorem{remark}[theorem]{Remark}

\DeclareMathOperator{\pr}{pr}

\DeclareMathOperator{\Sym}{Sym}

%
%
\newcommand{\volform}{dx^1\wedge dx^2\wedge \dots\wedge dx^\bdim}

\newcommand{\R}[1]{\mathbb{R}^{#1}}

\newcommand{\krondel}[2]{\delta^{#1}_{#2}}
\newcommand{\He}{Helmholtz\xspace}

\newcommand{\cL}{\mathcal{L}}

\newcommand\ianhook{\mathbin{\raise1pt\hbox{\hbox{{\vbox{\hrule height.4pt
width6pt depth0pt}}}\vrule height3pt width.4pt depth0pt}\,}}
\newcommand\bdim{m}
\newcommand\bcoord[1]{x^{#1}}
\newcommand\bcoordo[1]{x_{\scriptstyle o}^{#1}}
\newcommand\connform[2]{\omega^{#1}_{#2}}
\newcommand\curvform[2]{\Omega^{#1}_{#2}}
\newcommand\curvtensor[4]{R^{#4}{}_{#1#2#3}}
\newcommand\Christoffel[3]{\Gamma^{#1}_{#2#3}}
\newcommand\Cotton[3]{C^{#1#2}_{#3}}

\newcommand\CottonEc[3]{E^{#1}_{#2}(#3)}
\newcommand{\diag}{\mathop{\mathrm{diag}}}
\newcommand\difflift{\mathfrak g}

\newcommand\dvol{\nu}

\newcommand\ELh[3]{\textup{E}^{#1#2}(#3)}
\newcommand\ELho[3]{\textup{E}^{#1#2,#3}}
\newcommand\EL[3]{\textup{E}^{#1#2}(#3)}
\newcommand\ELstand{\textup{E}^{ab}(L)}
\newcommand\ELop{\textup{E}}
\newcommand\metric{\mbox{\sffamily{g}}}
\newcommand\gc[2]{g_{#1#2}}
\newcommand\ginvc[2]{g^{#1#2}}
\newcommand\gcl[2]{g_{\lower1.3pt\hbox{$\scriptstyle #1#2$}}}
\newcommand\gco[2]{g_{{\scriptstyle o,}#1#2}}
\newcommand\gcstand{g_{ab}}
\newcommand\gcj[3]{g_{#1#2,#3}}
\newcommand\gcjl[3]{g_{\lower1.3pt\hbox{$\scriptstyle #1#2,#3$}}}
\newcommand\gccoll[1]{g^{[#1]}}
\newcommand\gccollo[1]{g_{\scriptstyle o}^{[#1]}}
\newcommand\gdet{g}

\newcommand\gind{l}
\newcommand{\gindd}{p}
\newcommand\Gbundle{\mbox{\sffamily{G}}}
\newcommand\gla[1]{{\mathfrak gl}(#1)}
\newcommand\glg[1]{Gl(#1)}
\newcommand\J[2]{J^{#1}(#2)}
\newcommand\jo{k}
\newcommand\liealg{\Gamma}
\newcommand\vliealg{\Gamma}
\newcommand\lieder[1]{\mathcal{L}_{#1}}

\newcommand\Helm{\textup{H}_\textup{T}}
\newcommand\Helmacstand{\textup{H}^{ab,cd,I}_{\textup{T}}}
\newcommand\Helmgcstand{\textup{H}^{ab,cd,I}_{\textup{T}}}
\newcommand\Helmcomp[5]{\textup{H}^{#1#2,#3#4,#5}_{\textup{T}}}
\newcommand\Helmcompm[6]{\textup{H}^{#1#2,#3#4,#5;#6}_{\textup{T}}}
\newcommand\Helmcompo[4]{\textup{H}^{#1#2,#3#4}_{\textup{T}}}
\newcommand\Helmcompom[5]{\textup{H}^{#1#2,#3#4;#5}_{\textup{T}}}
\newcommand\invpoly[2]{I^{#1}(#2)}
\newcommand\poly{\mbox{\sffamily P}}
\newcommand\polysub{\mbox{\scriptsize\sffamily P}}
\newcommand\polyc[2]{\poly^{#1}_{#2}}
\newcommand\polys{\mbox{\sffamily p}}

\newcommand\Lagr{L}
\newcommand\Lagrg{L_g}
\newcommand\matriisi{A}
\newcommand\matriisiv{B}
\newcommand\wpartial[3]{\partial^{#1#2,#3}}
\newcommand\wpartialm[2]{\partial^{#1,#2}}
\newcommand\wpartialo[2]{\partial^{#1#2}}
\newcommand\copartial[4]{\partial^{#1#2}_{#3,#4}}
\newcommand\ortha[1]{{\mathfrak so}(#1)}
\newcommand\orthg[1]{O(#1)}
\newcommand\pmi{\mathbf p}

\newcommand\qmi{\mathbf q}

\newcommand\T[2]{\textup{T}^{#1#2}}
\newcommand\Tm[3]{\textup{T}^{#1#2;#3}}
\newcommand\trace[1]{tr_{#1}}
\newcommand\Tstand[2]{\textup{T}^{#1#2}}
\newcommand\Tstandg[2]{\textup{T}_{\raise1pt\hbox{$\scriptstyle g$}}^{#1#2}}
\newcommand\Tsf{\textup{T}}

\newcommand\vf{X}
\newcommand\bvf{\xi}
\newcommand\tvf{\tau}
\newcommand\liftedvf[1]{X_{#1}}
\newcommand\liftedvfevc[3]{X_{#1,\textup{ev},#2#3}}
\newcommand\vfc[1]{\xi^{#1}}
\newcommand\vfs[1]{\mathcal{X}(#1)}

\def\QED{\hskip0.1em\hfill\null\
  \null\nobreak\hfill\kern3pt\vbox{\hrule\hbox
   {\vrule\kern1pt\vbox{\kern1.7pt\hbox{$\scriptscriptstyle{QED}$}
    \kern0.2pt}\kern1pt\vrule}\hrule}}

\def\END{\hskip0.1em\hfill\null\
  \null\nobreak\hfill\kern3pt\vbox{\hrule\hbox
   {\vrule\kern1pt\vbox{\kern1.7pt\hbox{$\,\,\,\vspace{5pt}$}
    \kern0.2pt}\kern1pt\vrule}\hrule}}

\def\beginTakens{
	\begingroup                  
	\leftskip 0.35in\rightskip 0.3in  
	\it                          
\noindent{\par}  
}

\def\endTakens{
\par\noindent{\par}  
	\endgroup                        
}

\newcommand{\abs}[1]{\lvert\kern.1em{#1}\rvert}
\newcommand{\norm}[1]{\lVert\kern.1em{#1}\rVert}
\newcommand*{\pd}[2]{\mathchoice{\frac{\partial #1}{\partial #2}}
  {\partial #1/\partial #2}{\partial #1/\partial #2} {\partial
  #1/\partial #2}}

\hyphenation{Noeth-er Helm-holtz Poh-jan-pel-to}

  \title{Variational Principles for Natural Divergence-free Tensors in Metric Field Theories}

  \author{Ian M. Anderson
  \\
  {\footnotesize Department of Mathematics and Statistics, Utah State University,}\\
  {\footnotesize Logan, UT 84322-3900, U.S.A.}\\
  {\footnotesize \texttt{Ian.Anderson@usu.edu, http://www.math.usu.edu/$\sim$anderson/}}
  \and Juha Pohjanpelto\\
  {\footnotesize Department of Mathematics, Oregon State University,}\\
  {\footnotesize Corvallis, Oregon 97331--4605, U.S.A.}\\
  {\footnotesize \texttt{juha@math.oregonstate.edu,
      http://people.oregonstate.edu/$\sim$pohjanpp/}}
   }
  \date{}

\begin{document}

\maketitle
\begin{abstract} 
\noindent
Let $\T ab=\T ba=0$ be a system of differential equations for the components of a metric tensor on $\R\bdim$. Suppose that $\T ab$ transforms tensorially under the action of the diffeomorphism group on metrics and that the covariant divergence of $\T ab$ vanishes.  We then prove that $\T ab=\EL ab \Lagr$ is the Euler-Lagrange expression some Lagrangian density $\Lagr$  provided that $\T ab$ is of third order. Our result extends the classical works of Cartan, Weyl, Vermeil, Lovelock, and Takens on identifying field equations for the metric tensor with the symmetries and conservation laws of the Einstein equations.

\vspace{3pt}\noindent
\textbf{Keywords}: metric field theories, symmetries of differential equations, conservation laws, inverse problem of the calculus of variations.

\vspace{3pt}\noindent
\textbf{MSC 2010 classification.} Primary: 58E30, 58J70, Secondary: 35Q75.
\end{abstract}

\section{Introduction and main results}

The time-honored Noether's theorems \cite{Olv93} establish a correspondence between symmetries and conservation laws for the Euler-Lagrange equations of a classical variational problem. Noether's first theorem states that every infinitesimal symmetry of the variational problem determines a differential conservation law for the Euler-Lagrange equations, and conversely that, under certain mild non-degeneracy conditions, conservation laws can be associated with symmetries of the variational problem. Noether's second theorem, in turn, asserts that infinite dimensional symmetry pseudogroups of the variational problem involving  arbitrary functions of all the independent variables correspond to differential identities among the equations. 

In 1977, F.~Takens \cite{Tak77} considered the following novel and very distinct aspect of the interplay between symmetries, conservation laws and variational principles for systems of differential equations:
\vspace{5pt}

\beginTakens

\noindent Let $\liealg$ be a Lie algebra of vector fields
defined on the space of independent and dependent variables, and suppose that a
system of differential equations is invariant under $\liealg$ and that each
element in $\liealg$ generates a conservation law for the system.  Does it then
follow that the system arises from a variational principle?
\endTakens
\vskip5pt
In his original
paper Takens studied the question for second order scalar equations, systems
of linear equations, and equations arising in metric field theories. Takens' results
for second order scalar equations and for systems of linear equations were
subsequently generalized by Anderson and Pohjanpelto \cite{AnPo94},
\cite{AnPo95}, \cite{Poh}.  We refer to \cite{AnPo94} in particular for more
background material and motivation on Takens' problem.

In addition to the papers listed above, there is an extensive literature dealing with the existence
of variational principles for systems of differential equations admitting a Lie algebra of
symmetries and the corresponding conservation laws within the context of Noether's second theorem, that is, where the symmetry group is the infinite dimensional
group of coordinate transformations of the underlying manifold and the
conservation laws express the vanishing of the covariant divergence (or some
variant of it) of the field equations.  This work is largely directed towards the axiomatic characterization of the Einstein equations. 
The original classification results of Cartan
\cite{Ca22}, Vermeil \cite{Ve17}, and Weyl \cite{We22} establish that second order
quasi-linear field equations for the metric tensor possessing the symmetries
and conservation laws of the Einstein equations necessarily arise from a
variational principle. These results were later extended to general, fully nonlinear second order
systems for the metric tensor and to third order systems in the 3-dimensional case by Lovelock, \cite{Lo70, Lo71, Lo76}, again by a detailed classification of all equations with the required properties. A direct proof for second order systems based on the analysis of the \He conditions \cite{And}, i.e., the integrability conditions for the existence of a variational principle, can be found in \cite{Tak77}. Lovelock's results were later generalized to metric-scalar
\cite{HorLov72}, \cite{Hor74b}, metric-vector \cite{Lo74}, and metric-bivector \cite{Lo77}
theories. Anderson \cite{And78} subsequently provides a general characterization of second order divergence free systems for the metric tensor and an auxiliary independent tensor field, subsuming in part the above-mentioned works on combined metric field theories.

In \cite{Hor75}, Horndeski  attempts  to extend Lovelock's work to general third order equations with the symmetries and conservation laws of the Einstein equations. However, the treatment for the existence of a variational principle falls short of a comprehensive result due to a restrictive extraneous assumption that the zeroth order Helmoltz conditions for the system be invariant under the action of the diffeomorphism group. 

The next important step extending Lovelock's work was the introduction of the generalized Cotton tensors in \cite{Ald1980}.
These tensors and their construction is placed into the proper differential geometric context by Anderson \cite{And84} as the Euler-Lagrange expressions of Lagrangians derived from the Chern-Simons forms of Riemannian geometry. As is discovered in \cite{And84}, these analogs of the classical Cotton tensors of 3-dimensional conformal geometry play a key role in the equivariant inverse problem of the calculus of variations for metrics.

The aim of this paper is to extend the results of Cartan, Vermeil, Weyl, and Lovelock to general third order field equations for the metric tensor with the symmetries and conservation laws of the Einstein equations. As is intimated by the intricacy of the constructions in \cite{Ald1980}, \cite{And84}, a direct classification of third order equations along the lines of the original works on the subject would be a formidable undertaking. 
However, as is well known, the existence of a local Lagrangian for a system of differential equations is ensured by the vanishing of the classical \He conditions for the equations, and our problem is
rendered tractable by an analysis of these conditions for systems sharing the properties of the Einstein equations.

To describe our results more precisely, write $(\bcoord a)$ for the coordinates on $\R{\bdim}$. A metric
$$\metric{}{}=\gc a b d\bcoord a\otimes d\bcoord b$$
on $\R\bdim$ is a symmetric type $(0,2)$ tensor field with $\gdet = \det(\gc a b)\neq0$, where the $\gc a b$ stand for the components of $\metric$. In this paper we consider metrics of fixed but arbitrary signature. The action of 
the Lie algebra $\vfs{\R\bdim}$ of vector fields 
on metric tensors via pull-back gives rise to the infinitesimal transformation group
  \begin{equation}\label{eq:difflift}
  \difflift = \left\{\liftedvf\bvf=\vfc i\pd{}{\bcoord i} -2\vfc c_{,(a}\gc{\lower1.6pt\hbox{$\scriptstyle b)$}}{\lower1.6pt\hbox{$\scriptstyle c$}}
      \pd{}{\gc a b} \mid \vfc a\in
    \vfs{\R\bdim}\right\}
  \end{equation}
on the coordinate space $\Gbundle=\{(\bcoord i,\gc bc)\}$.

The metric tensor $\metric$ is subject to a system of $k$th
order partial differential equations
\[
\Tstand a b=\Tstand {(a}{ b)}(\bcoord i,\gc c d,\gcj c d{i_1}, \gcj c d{i_1i_2},\ldots,
\gcj c d{i_1i_2\cdots i_k})=0,\qquad a,b=1,\dots, \bdim,
\]
where $\gcj c d{i_1i_2\cdots i_l}$ denotes the derivative of $\gc c d$
with respect to the independent variables $\bcoord {i_1}$,$\bcoord {i_2}$,\dots, $\bcoord {i_l}$. The
operator $\Tstand a b$ is locally variational if it can be written in some neighborhood of each point of its domain as the Euler--Lagrange expression
\begin{equation}\label{eq:EulerLgrangeT}
\Tstand a b=\EL a b{\Lagr}= \pd{\Lagr}{\gc a b} -D_{i_1}\left(\pd{\Lagr}{\gcj a b{i_1}}\right)+
D_{i_1}D_{i_2}\left(\pd{\Lagr}{\gcj a b{i_1i_2}}\right)-\cdots
\end{equation}
of some locally defined Lagrangian 
\[
\Lagr = \Lagr(\bcoord c,\gc c d,\gcj c d{i_1}, \gcj c d{i_1i_2},\ldots,\gcj c d{i_1i_2\cdots i_l})
\] 
depending on the components of the metric tensor and their derivatives. Here $D_i$ denotes the
standard coordinate total derivative operator.

If the Lagrangian $\Lagr$ transforms as a scalar density, that is,
\begin{equation}\label{E:LScalarDensity}
\lieder{\pr\liftedvf{\vfc{}}}\Lagr={\pr\liftedvf{\vfc{}}}(\Lagr) + \mathrm{div}\,\vfc{}\,\Lagr=0,\qquad
\text{for all $\liftedvf{\vfc{}}\in\difflift$},
\end{equation}
 or, equivalently,
the density $\Lagrg=g^{-1/2}\Lagr$ is invariant under the prolonged action of $\difflift$, then the Euler-Lagrange expressions $\Tstand ab=\EL a b \Lagr$ constitute a tensor density, whereby
\begin{equation}\label{eq:TensorDensity}
\lieder{\pr\liftedvf{\vfc{}}}\Tstand ab={\pr\liftedvf{\vfc{}}}(\Tstand ab) - 2\vfc{(a}_{,c}\Tstand{b)}c+\mathrm{div}\,\vfc{}\,\Tstand ab=0,\qquad\text{for all $\liftedvf{\vfc{}}\in\difflift$}.
\end{equation}
Here $\lieder{}$ denotes the standard Lie derivative operator. Lagrangians $\Lagr$ and differential operators $\Tstand ab$ satisfying \eqref{E:LScalarDensity} or \eqref{eq:TensorDensity} are also known as \emph{natural} tensor densities.

In addition, in light of the diffeomorphism invariance \eqref{eq:TensorDensity}, Noether's second theorem implies that the components $\Tstand a b=\EL ab\Lagr$ are divergence-free,
\begin{equation}\label{eq:CovarDiv}
D_b \Tstand a b + \Christoffel a b c \Tstand b c=0,
\end{equation}
where the $\Christoffel a b c$ denote the standard Christoffel symbols of the metric $\metric$.

In the present paper we investigate a partial converse to Noether's second theorem for third order operators for metrics, that is, whether a symmetric, type $(2,0)$ differential operator
$\Tstand ab =\Tstand{(a}{b)}(\bcoord i, \gc c d,\gcj c d{i_1}, \gcj c d{i_1i_2},\gcj c d{i_1i_2i_3})$ satisfying the invariance condition \eqref{eq:TensorDensity} and subject to the differential constraints \eqref{eq:CovarDiv} coincides with the Euler-Lagrange expression of some Lagrangian $\Lagr$.

Our main result is the following.
\smallskip
\begin{theorem}\label{th:main1}
  Suppose that a third order differential operator
\[\Tstand a b=\Tstand {(a}{ b)}(\bcoord i,\gc c d,\gcj c d{i_1}, \gcj c d{i_1i_2},\gcj c d{i_1i_2i_3}),\qquad a,b=1,\dots, \bdim,
\]
admits the symmetries
\eqref{eq:TensorDensity} and satisfies the differential constraints \eqref{eq:CovarDiv}. Then $\Tstand a b$ is locally variational. Moreover, suppose that $\T ab$ is everywhere smooth. Then it can be written as
\begin{equation}\label{eq:InverseProblem}
\T ab = \begin{cases} \EL ab\Lagr, \quad&\text{if $\bdim \equiv 0,1,2 \mod 4$,}\\
\EL ab\Lagr + \Cotton ab \polysub,\quad&\text{if $\bdim \equiv 3 \mod 4$,}
\end{cases}
\end{equation}
where the Lagrangian $\Lagr$ is a scalar density satisfying \eqref{E:LScalarDensity} and $\Cotton ab \polysub$ is the generalized Cotton tensor associated with an invariant polynomial $\poly$ on  $\ortha{p,q}$, $p+q=\bdim$, of degree $(\bdim+1)/2$.
\end{theorem}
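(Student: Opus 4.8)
The plan is to sidestep a direct classification of the operators $\T ab$ and to work instead with the \He conditions, the integrability conditions for the local inverse problem. Recall that a symmetric type $(2,0)$ operator is locally variational if and only if all of its \He expressions vanish, these being the conditions for the formal self-adjointness of its Fr\'echet derivative $h_{cd}\mapsto (\partial\T ab/\partial g_{cd,I})\,D_I h_{cd}$, regarded as a linear operator on symmetric type $(0,2)$ tensor fields $h$. The first step is the observation that the Euler--Lagrange operator and the \He operator commute with the prolonged action of $\difflift$; hence the \He expression of a natural tensor density $\T ab$ is itself natural, transforming tensorially under diffeomorphisms and carrying the index pattern of $\T ab$ together with the pair of indices dual to $h_{cd}$ and a symmetrized block of derivative indices. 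This is precisely the invariance that Horndeski \cite{Hor75} had to postulate as an extra hypothesis; here it is automatic, and it reduces the theorem to the tensorial statement: \emph{a third order natural divergence-free symmetric tensor density has identically vanishing \He expression.}

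For the second and central step, one extracts enough from the divergence constraint \eqref{eq:CovarDiv} and the invariance \eqref{eq:TensorDensity} to annihilate the \He expression order by order, beginning at the top. By the replacement theorem, naturality forces $\T ab$ to be a universal function of the metric, the curvature tensor, and, since it is of third order, the first covariant derivative of the curvature, so that its principal part is an $\ortha{p,q}$-equivariant expression in $\nabla R$. Linearizing \eqref{eq:CovarDiv} along a one-parameter family $g+th$ yields an identity expressing the covariant divergence of the linearization of $\T ab$ in terms of lower-order expressions in $h$ contracted against $\T ab$ itself; differentiating this identity with respect to the highest jet coordinates of the metric and combining the result with the symbol relations that follow from \eqref{eq:TensorDensity} constrains the order-three symbol of $\T ab$ so tightly that the top \He component --- which is built algebraically from that symbol --- must vanish. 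One then descends: with the order-three component removed, the order-two \He component becomes a new equivariant tensor expression subject to the same two families of identities, and so on down to order zero. Carrying this out is a finite computation in the invariant theory of $\ortha{p,q}$ acting on the relevant jet spaces, in the spirit of the \He-condition analyses of \cite{Tak77} and \cite{AnPo94}; I expect this order-by-order reduction --- keeping the obstruction spaces under control without enumerating all admissible $\T ab$ --- to be the main obstacle.

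Once the \He expression vanishes, $\T ab$ is locally variational and the standard homotopy formula for the Euler--Lagrange operator produces a local Lagrangian $\Lagr_0$ with $\T ab=\EL ab{\Lagr_0}$. The final step is to show that, up to the stated correction term, $\Lagr_0$ may be taken to be a natural scalar density in the sense of \eqref{E:LScalarDensity}. Here one passes to the equivariant inverse problem: the discrepancy between $\Lagr_0$ and a natural Lagrangian is governed by an equivariant cohomology class, and in the present order and weight the Euler--Lagrange operator maps natural Lagrangians onto natural source forms with a cokernel generated by the generalized Cotton tensors of \cite{Ald1980}, \cite{And84} --- the Euler--Lagrange expressions of the Chern--Simons Lagrangians, which are themselves natural only up to an exact error but whose Euler--Lagrange forms are genuine natural tensors. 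A Chern--Simons Lagrangian of the correct dimension arises from an $\ortha{p,q}$-invariant polynomial of degree $(\bdim+1)/2$, and since the invariant polynomials of $\ortha{p,q}$ are generated by those of even degree, such a polynomial is nonzero precisely when $(\bdim+1)/2$ is even, i.e. when $\bdim \equiv 3 \mod 4$; this yields the case split in \eqref{eq:InverseProblem}, with $\T ab=\EL ab{\Lagr}$ and $\Lagr$ natural in the remaining residue classes. The global smoothness hypothesis enters at this stage to upgrade the germ-level identification to a genuine identity on $\R\bdim$ and to make the Chern--Simons construction available globally, the divergence-free condition rendering the ambiguities in that construction immaterial at the level of Euler--Lagrange forms.
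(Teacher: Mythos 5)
Your reduction of the theorem to the vanishing of the \He operator of $\Tsf$, followed by an appeal to the equivariant inverse problem of \cite{And84} to obtain the dichotomy \eqref{eq:InverseProblem}, matches the paper's overall architecture, and your accounting of the Cotton-tensor correction in dimensions $\bdim\equiv3\bmod4$ is essentially the content of Theorem \ref{T:cohomology}, which the paper simply cites. But the central step --- proving that a third order natural, divergence-free $\T ab$ actually satisfies the \He conditions --- is precisely what you do not supply. You assert that linearizing \eqref{eq:CovarDiv}, differentiating with respect to the top jet coordinates, and combining with the symbol relations from \eqref{eq:TensorDensity} ``constrains the order-three symbol so tightly that the top \He component must vanish,'' and you defer the verification to an unspecified finite computation in $\ortha{p,q}$-invariant theory which you yourself identify as the main obstacle. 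That is the theorem, not a proof of it; and it cannot follow from soft equivariance considerations alone, since for vector-field and Yang--Mills theories third order natural, divergence-free source forms exist that are \emph{not} locally variational, so the specific structure of the metric theory must enter quantitatively. The paper's actual argument is quite different from the replacement-theorem/invariant-theory route you sketch: it first invokes the polynomiality lemma (Lemma \ref{L:polT}, from Anderson--Duchamp and Olver) to conclude from divergence-freeness alone that $\T ab$ is polynomial of degree at most $\bdim-1$ in the third derivatives $\gcj cd{ijk}$, and then runs a downward induction on the number of differentiations of $\Helmcomp abcd{ijk}$ with respect to third-order variables, using the identities of Propositions \ref{P:HelmIntCond}, \ref{P:HelmConsCondO} and \ref{p:HelmCompExpression}, with the induction step closed by a maximal-modulus (diagonal dominance) estimate of the form $(\bdim+2)\,\abs{X}\leq(\bdim-1)\,\abs{X}$ forcing the relevant derivatives to vanish. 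Nothing playing the role of Lemma \ref{L:polT} or of this estimate appears in your proposal, so the core of the proof is missing.

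Two smaller points. First, your opening claim that the \He expression of a natural source form is ``itself natural'' and that this renders Horndeski's hypothesis automatic is too quick: what follows from invariance is the equivariance $\cL_{\pr X}(\Helm(Y))=\Helm([X,Y])$, whose component form \eqref{eq:HelmCompInv} mixes $\Helmcomp abcd I$ with the components of strictly higher order; only the top-order component transforms tensorially by itself, which is exactly why Horndeski's assumption on the \emph{zeroth order} conditions was a genuine restriction and why the paper works instead with the full system of identities \eqref{eq:HelmIntCond} and \eqref{eq:HelmConsCondO}. Second, your descent from the vanishing of the third order component to the lower ones is in the right spirit, but it needs the precise combination of the skew/symmetry identities \eqref{eq:HelmIntCond2}--\eqref{eq:HelmIntCond3} with \eqref{eq:HelmConsCondO2}--\eqref{eq:HelmConsCondO3} and the algebraic identity \eqref{eq:SymmPair}, as in Corollary \ref{C:ThirdOrderVanish}; stating that the lower components are ``subject to the same two families of identities'' does not by itself yield their vanishing.
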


The generalized Cotton tensors $\Cotton ab \polysub$ are, therefore, locally variational, but, as is proved in \cite{And84}, they can not be written as the Euler-Lagrange expressions of a natural Lagrangian.
In the physically most relevant situation with $\bdim =4$, Theorem \ref{th:main1} asserts that if a third order, natural system of differential equations 
\[\Tstand {a}{ b}(\gc c d,\gcj c d{i_1}, \gcj c d{i_1i_2},\gcj c d{i_1i_2i_3})=0,\qquad\text{where $\Tstand a b=\Tstand {(a}{ b)}$},\]
for the components $\gc cd$ of the metric tensor 
is divergence free \eqref{eq:CovarDiv}, then there is a natural Lagrangian $\Lagr$ so that $\Tstand ab = \EL ab \Lagr$.
  
\smallskip
This paper is organized as follows. After covering some preliminary material
relevant to the problem at hand in section \ref{S:prelims}, we analyze in section
\ref{sec:symcl} the relationship between symmetries \eqref{eq:TensorDensity} and the 
conservation law \eqref{eq:CovarDiv} for metric field equations. In particular, we show
that any natural differential operator $\Tstand a b$ admitting translational conservation laws is necessarily divergence free. This
interesting though elementary fact does not seem to have been previously noted
in the literature.  Then in section \ref{sec:proof} we present the proof of Theorem
\ref{th:main1} and, finally, in section \ref{sec:discussion} we discuss some open problems and generalizations of the work at hand. 

\section{Preliminaries}\label{S:prelims}

In this section we collect together some basic definitions and results from the formal calculus of variations on jet spaces germane to the problem at hand. For more details and proofs we refer, e.g., to~\cite{And,Olv93}.

Let $\Gbundle\to\R\bdim$ be the trivial bundle of metrics, that is, of non-degenerate symmetric bilinear forms on $\R\bdim$ with fixed signature.  Denote the coordinates of $\R\bdim$ by $(\bcoord 1,\bcoord 2,\dots,\bcoord\bdim)$. Then the components $\gc a b=\gc{(a}{b)}$ of a metric $\metric$ are determined by $\metric = \gc a b d\bcoord a\otimes d\bcoord b$, so that, as a coordinate bundle,
\[
\Gbundle=\{(\bcoord i,\gc a b)\}\to \{(\bcoord i)\},\qquad \text{where $a\leq b$}.
\]
We denote the bundle of order $\jo$ jets, $0\leq \jo\leq\infty$, of local sections of $\Gbundle$ by $\J\jo \Gbundle$;
in the induced coordinates
\begin{equation}\label{eq:YMcoord}
  \J\jo \Gbundle = \{(x^i,\gc a b,\gcj a b{i_1},
  \gcj a b{i_1i_2},\ldots,\gcj a b{i_1i_2\cdots i_\gind},\ldots,\gcj a b{i_1i_2\cdots i_\jo})\},
\end{equation}
where $\gcj a b{i_1i_2\cdots i_\gind}$, $a\leq b$, stands for the $\gind$-th order derivative
variables. For notational convenience we let $\gcj a b{i_1i_2\cdots i_\gind}=\gcj b a{i_1i_2\cdots i_\gind}$, when $a>b$. We also  use $\gccoll k$ to collectively designate all the variables $\gcj a b{i_1\cdots i_p}$, $p=0,\dots,\jo$, up to order $\jo$.

Let $I=(i_1,i_2,\dots,i_\gind)$, $1\leq i_\gindd\leq\bdim$, denote an unordered multi-index
of length $|I|=\gind$.  Define partial derivative operators $\wpartial a b I$, $1\leq a,b\leq\bdim$, $|I|\geq0$,
by
\begin{equation}\label{eq:wpd}
  \wpartial a b I  \gcj c d J = \begin{cases}
    \krondel{\hphantom{(}a}{(c} \krondel{b}{d)}
    \krondel {(i_1} {\hphantom{(}j_1}\krondel {i_2} {j_2}\cdots\krondel {i_k)} {j_k},
    \qquad&\text{if $|I|=|J|$},\\
    0,\qquad&\text{if $|I|\neq |J|$},
  \end{cases}
\end{equation}
where round brackets indicate symmetrization in the enclosed indices. Then, for example, the
standard coordinate total derivative operators $D_i$ on $\J\infty\Gbundle$ are given in terms of the differential operators \eqref{eq:wpd} by
\begin{equation}
  \label{eq:totalder}
  D_i = \pd{}{\bcoord i}+
  \sum_{|I|\geq0}
  \gcj a b{Ii}\wpartial a b I,
  \qquad i=1,2,\dots,\bdim.
\end{equation}
The expression \eqref{eq:totalder} leads to the commutation formula
\begin{equation}\label{eq:totaldercomm}
[\wpartial ab I,D_j] = \wpartial ab{(i_1\cdots i_{k-1}}\delta^{i_k)}_j.
\end{equation}
We will employ the standard Einstein summation convention in what follows.

The flow of a vector field
\begin{equation}\label{eq:vectorfield}
  \vf= P^i(x^j,\gc c d)\pd{}{x^i}
  +Q_{ab}(x^j,\gc c d)\wpartialo a b
\end{equation}
on $\Gbundle$ induces a transformation on the space of sections of $\Gbundle$,
and, hence, by differentiation, it generates a local 1-parameter transformation group acting on
$\J k\Gbundle$, $k\geq0$. The associated infinitesimal generator is called the
$k$-{th} order \emph{prolongation} of $X$ and is denoted by $\pr^k X$. The components of $\pr^k X$
are given by the usual prolongation formula
\begin{equation}\label{eq:prolongation}
  \pr^k X = P^i D_i+\sum_{|I|\leq k}
  D_I (X_{\text{ev},ab})
  \wpartial a b I,
\end{equation}
where the $X_{\text{ev},ab}$ denote the components of the
\emph{evolutionary form}
\begin{equation*}
  X_{\text{ev}} = (Q_{ab}-P^c\gcj a b c)
  \partial^{ab}
\end{equation*}
of $X$ and where, given a multi-index $I=(i_1,\dots,i_k)$, we use the abbreviated notation $D_I = D_{i_1}\cdots D_{i_k}$ . We will also write $\pr^\infty X = \pr X$.  The vector field
\eqref{eq:vectorfield} is called \emph{projectable} if the coefficients
$P^a=P^a(x^i)$ are functions of the independent variables $x^i$ only. In particular, the infinitesimal generators of the action by the lifted diffeomorphism group \eqref{eq:difflift} form a Lie algebra $\mathfrak g$ of projectable vector fields on $\Gbundle$ with 
\[
\liftedvfevc{\vfc{}}ab=-2\vfc c_{,(a}\gc{\lower1.3pt\hbox{$\scriptstyle b)$}}{\lower1.3pt\hbox{$\scriptstyle c$}}
-\vfc c\gcj abc.
\]

We associate to a given differential operator $\Tstand a b=\Tstand a b (x^i,\gccoll k)$ the \emph{source form}
\begin{equation}\label{eq:sourceform}
  \Tsf = \Tstand a b d\gc a b \wedge\dvol,
\end{equation}
where $\dvol = \volform$ is the standard volume form on $\R\bdim$.  The source form $\Tsf$ is called \emph{natural} if $\Tsf$ is invariant under the prolonged action of the diffeomorphism group $\difflift$, that is,
\[
\lieder{\pr\liftedvf{\vfc{}}}\Tsf =0,\qquad\text{for all $\vfc{}\in\vfs{\R\bdim}$,}
\]
where $\liftedvf{\vfc{}}$ is defined in \eqref{eq:difflift}. As is easy to verify, a source form $\Tsf = \Tstand a b d\gc a b \wedge\dvol$ is natural precisely when the components $\Tstand ab$ form a tensor density as in \eqref{eq:TensorDensity}.
 A vector field $X$ on $\Gbundle$ \emph{generates a conservation law} for $T$ if there are
differential functions $t^i= t^i(x^j,\gccoll l)$, $i=1,\dots,\bdim$, so that
\begin{equation}
X_{\text{ev,}}{}_{ab} \Tstand a b
  = D_i t^i.
\end{equation}

The source form $\Tsf$ is said to be derivable from a \emph{variational principle} if there is a
\emph{Lagrangian function} $L= L(x^i,\gccoll l)$ such that $\Tsf$ is the \emph{Euler-Lagrange
expression} of $L$, {i.e.},
\begin{equation}\label{eq:EulerLagrange}
\Tstand a b  = \ELh a b L =
\sum_{|I|\geq 0}
(-D)_I (\wpartial a b I L).
\end{equation}
It will be convenient to call
\[\lambda = L(x^a,\gccoll l)\,\dvol\]
a \emph{Lagrangian $n$-form} and
\[
\ELop(\lambda) = \ELstand\,d\gc a b\wedge\dvol
\]
the \emph{Euler-Lagrange
form} associated with $\lambda$.  As is well known \cite{And}, the Euler-Lagrange operator
commutes with the prolonged action of projectable transformations on
$\Gbundle$; infinitesimally,
\begin{equation}\label{eq:EulerEquivariance}
  \ELop(\cL_{\pr X}\lambda) = \cL_{\pr X}\ELop(\lambda),
\end{equation}
for every projectable vector field $X$ and Lagrangian form $\lambda$, where
$\cL$ denotes the standard Lie derivative operator.

The \He operator $\Helm$ acts on
evolutionary vector fields $Y = Y_{ab}\wpartialo ab$ on $\Gbundle$ by
\begin{equation}\label{eq:Helmdefn}
  \Helm(Y) = \mathcal L_{\pr Y}\Tsf
  -\ELop(Y\ianhook\mkern2mu \Tsf),
\end{equation}
cf.~\cite{AnPo95b}. If we write
\begin{equation}\label{eq:Helmcompdefn}
  \Helm(Y) =
  \sum_{|I|\geq0}
  (D_IY_{cd})\Helmgcstand d\gcstand\wedge\dvol,
\end{equation}
then the components $\Helmgcstand$ of $\Helm$ are explicitly given by
\begin{equation}\label{eq:HelmComp}
  \Helmgcstand =
  \wpartial c d I \Tstand a b -
  (-1)^{|I|} \ELho a b I(\Tstand cd),\qquad |I|\geq 0.
\end{equation}
Here the $\ELho a b I$ denote the higher Euler-Lagrange operators \cite{And} acting on
a differential function $F$ defined on some $\J k\Gbundle$ by
\[
\ELho a b I(F) = \sum_{|J|\geq0} \tbinom {|I|+|J|}{|I|}
(-D)_J(\wpartial ab{IJ} F), \qquad |I|\geq 0.
\]
Note that if $\Tsf = \Tstand a b (x^i,\gccoll k) d\gcstand \wedge\dvol$ is of order
$k$, then $\Helmgcstand=0$ for $|I|>k$ and for $|I|=0,\dots, k$, the
components $\Helmgcstand$ are of order at most $2k-|I|$.

It is not difficult to see that a source form $\Tsf = \ELop(L)$ deriving from a variational
principle satisfies the \He conditions $\Helm\equiv0$, or, in components,
$\Helmgcstand=0$.  Conversely, one can show \cite{And} that if the \He
conditions $\Helm\equiv0$ are satisfied, then, at least locally, that is, in some neighborhood of each point in its domain in $\J k\Gbundle$, the source form $\Tsf$
can be written as the Euler-Lagrange expression of some Lagrangian $L$.
Accordingly, we will call a source form satisfying the \He conditions
\emph{locally variational}.
\smallskip
\begin{proposition}\label{pro:Lieder}
  Suppose that $X = P^i\pd{}{x^i}+Q_{ab}\wpartialo ab$ is a
  projectable vector field on $\Gbundle$ and that a source form $\Tsf$ is
  invariant under the prolongation $\pr X$ of $X$.  Then the components
  $\Helmgcstand$ of the \He operator $\Helm$ associated with $\Tsf$ satisfy
  the invariance conditions
  \begin{equation}\label{eq:HelmCompInv}
    {\pr X}(\Helmgcstand)
    +\sum_{|J|\geq|I|}\Helmcomp ab ef J\wpartial cd I Q_{ef,J}
    +\Helmcomp ef cd I
    \wpartialo ab Q_{ef}
    +\pd{P^j}{x^j}\Helmacstand=0,
  \end{equation}
  where $Q_{cd,J}$ denotes the $\gcj cd J$-component of $\pr X$.
\end{proposition}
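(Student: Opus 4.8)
The plan is to derive the invariance conditions \eqref{eq:HelmCompInv} directly from the definition \eqref{eq:Helmdefn} of the Helmholtz operator together with the naturality (invariance) of $\Tsf$ and the equivariance \eqref{eq:EulerEquivariance} of the Euler--Lagrange operator. Since $\Tsf$ is $\pr X$-invariant, $\cL_{\pr Y}\Tsf$ transforms in a predictable way under $\pr X$, and likewise $\ELop(Y\ianhook\Tsf)$ by \eqref{eq:EulerEquivariance}; applying $\cL_{\pr X}$ to \eqref{eq:Helmdefn} and commuting it through should yield an intrinsic identity of the form $\cL_{\pr X}\bigl(\Helm(Y)\bigr) = \Helm(\text{something built from }Y\text{ and }X)$, which one then reads off in the components \eqref{eq:Helmcompdefn}--\eqref{eq:HelmComp}.

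More concretely, first I would compute the commutator $[\cL_{\pr X},\cL_{\pr Y}]$ acting on a source form. On forms this is $\cL_{\pr[X,Y]}$ by naturality of the Lie derivative, and since $X$ is projectable and $Y$ is evolutionary, the bracket $[X,Y]$ is again (equivalent to) an evolutionary vector field whose components are $\pr X(Q_{ab}) - (\text{terms from the action of }Y\text{ on the coefficients of }X)$. Tracking these components precisely is what produces the three correction sums in \eqref{eq:HelmCompInv}: the term $\Helmcomp ab ef J\,\wpartial cd I Q_{ef,J}$ comes from $\pr X$ differentiating the jet-variable arguments $Q_{ef,J}$ that appear when $\Helm(Y)$ is expanded in the $D_I Y_{cd}$ basis; the term $\Helmcomp ef cd I\,\wpartialo ab Q_{ef}$ comes from the non-trivial transformation of the $d\gcstand$ factor (the index pair $cd$) under $\pr X$ — i.e. from the fibre part $Q_{ab}\wpartialo ab$ of $X$ acting on the contact one-forms; and the term $\pd{P^j}{x^j}\Helmacstand$ comes from the transformation of the volume form $\dvol$ under the base flow of $X$, exactly mirroring the $\mathrm{div}\,\vfc{}$ weight that appears in \eqref{eq:TensorDensity} and \eqref{E:LScalarDensity}.

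An alternative and perhaps cleaner route is to work entirely at the level of the component formula \eqref{eq:HelmComp}: apply $\pr X$ to $\Helmgcstand = \wpartial cd I\Tstand ab - (-1)^{|I|}\ELho ab I(\Tstand cd)$, use the invariance of $\Tstand ab$ in the form \eqref{eq:TensorDensity} (rewritten for a general projectable $X$ as $\pr X(\Tstand ab) + Q_{ef}\,\wpartialo ab\Tstand{ef}\cdot(\text{index action}) + \pd{P^j}{x^j}\Tstand ab = 0$, where the middle term encodes the $-2\vfc{(a}_{,c}\Tstand{b)}c$ contribution), and then commute $\pr X$ past the operators $\wpartial cd I$ and $\ELho ab I$. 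The commutator of $\pr X$ with $\wpartial cd I$ is governed by how $\pr X$ acts on jet coordinates, which is \eqref{eq:prolongation} together with the Leibniz rule, and produces precisely the $\wpartial cd I Q_{ef,J}$ factors; a similar computation handles $\ELho ab I$. Collecting terms and matching weights gives \eqref{eq:HelmCompInv}.

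The main obstacle I expect is bookkeeping: correctly commuting $\pr X$ through the higher Euler--Lagrange operators $\ELho ab I$ and through the operators $\wpartial cd I$ while keeping careful track of (i) the summation ranges $|J|\ge|I|$, (ii) the symmetrizations hidden in the multi-index notation, and (iii) the sign $(-1)^{|I|}$ and the binomial coefficients $\tbinom{|I|+|J|}{|I|}$. Getting the index pair $cd$ versus $ab$ in the two correction sums right — i.e. which slot is acted on by $\wpartialo ab Q_{ef}$ — requires care, as does verifying that the $\pd{P^j}{x^j}$ coefficient is a genuine overall divergence weight and not something order-dependent. The conceptual content, however, is just that the Helmholtz operator is itself natural (equivariant) because it is built from the natural operations $\cL_{\pr(\cdot)}$, $\ELop$, and interior multiplication, so once the right formula for the action of $\pr X$ on $\Helm(Y)$ is written down, \eqref{eq:HelmCompInv} follows by reading off components.
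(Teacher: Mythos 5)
Your primary route is exactly the paper's argument: invariance of $\Tsf$ plus equivariance of $\ELop$ give the intrinsic identity $\cL_{\pr X}\bigl(\Helm(Y)\bigr)=\Helm([X,Y])$, and comparing component expansions in the basis $D_IY_{cd}$ yields \eqref{eq:HelmCompInv}, so the proposal is correct and essentially identical in approach. One small precision for the write-up: the correction sum $\sum_{|J|\geq|I|}\Helmcomp ab ef J\wpartial cd I Q_{ef,J}$ does not arise from $\pr X$ hitting arguments already present in $\Helm(Y)$ (the $Q_{ef,J}$ do not occur there), but from rewriting $\pr X(D_IY_{cd})$ via $[\pr X,\pr Y]=\pr[X,Y]$ as $D_I([X,Y]_{cd})+\pr Y(Q_{cd,I})$, i.e.\ from $\pr Y$ differentiating the prolonged coefficients of $X$.
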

\smallskip

\begin{proof}
  We first compute
  \begin{equation}\label{eq:XderHelm}\begin{split}
      \cL_{\pr X}\left(\Helm(Y)\right) &= \cL_{\pr X}(\cL_{\pr Y}\Tsf)
      -\cL_{\pr X}\ELop(Y\ianhook \Tsf)\\
      &= \cL_{\pr [X,Y]}\Tsf-\ELop((\cL_{\pr X}Y)\ianhook \Tsf) =\Helm([X,Y]),
    \end{split}
  \end{equation}
  where we used the invariance of $\Tsf$ and the equivariance of the Euler-Lagrange operator under the prolonged action of projectable
transformations.  Next write $Y = Y_{cd}\wpartialo cd{}$.  Then, on
  account of \eqref{eq:Helmcompdefn},
  \begin{equation}\label{eq:XderHelmcomp1}\begin{split}
      \cL_{\pr X}\left(\Helm(Y)\right) &= \sum_{|I|\geq0}\bigg({\pr
        X}(\Helmgcstand) D_IY_{cd}
      +\Helmgcstand {\pr X} (D_I Y_{cd})\\
      &\quad+\Helmcomp efcd I (D_IY_{cd}) \wpartialo ab Q_{ef}
      +\pd{P^j}{x^j}\;\Helmgcstand D_I Y_{cd}\bigg) d\gcstand\wedge\dvol.
    \end{split}\end{equation}
The identity
\[
[\pr X,\pr Y] = \pr [X,Y]
\]
yields
\begin{equation}\label{eq:XdertotderY}
\begin{split}
{\pr X} (D_I Y_{cd})&= D_I([X,Y]_{cd})+\pr Y(Q_{cd,I})\\
&=D_I([X,Y]_{cd}) +\sum_{|J|\geq0}(D_J Y_{ef})\wpartial ef J Q_{cd,I}.
\end{split}
\end{equation}
  Now by virtue of
  \eqref{eq:XdertotderY}, equation \eqref{eq:XderHelmcomp1} becomes
  \begin{equation}\label{eq:XderHelmcomp2}
    \begin{split}
      \cL_{\pr X}\left(\Helm(Y)\right) &= \sum_{|I|\geq0}\bigg({\pr
        X}(\Helmgcstand) D_IY_{cd}
      +\Helmgcstand D_I([X,Y]_{cd})\\
      &\qquad+\sum_{|J|\geq0}\Helmgcstand(D_J Y_{ef})\wpartial ef J Q_{cd,I} +\Helmcomp efcd I (D_IY_{cd}) \wpartialo ab Q_{ef}\\
      &\qquad\qquad+\pd{P^j}{x^j}\;\Helmacstand D_I Y_{cd}\bigg)
      d\gcstand\wedge\dvol.
    \end{split}
  \end{equation}
  A comparison of \eqref{eq:XderHelm} with \eqref{eq:XderHelmcomp2}
  yields equation \eqref{eq:HelmCompInv}, as required.
\end{proof}

\begin{proposition}\label{P:HelmIntCond} Let $\Tsf$ be a third order source form. Then the components $\Helmcomp abcd{}$, $\Helmcomp abcd i$, $\Helmcomp abcd{ij}$, $\Helmcomp abcd{ijk}$ of the \He operator associated with $\Tsf$ satisfy the integrability conditions 
\begin{subequations}\label{eq:HelmIntCond}
\begin{align}
\Helmcompo abcd{}+\Helmcompo cdab{} &=D_i\Helmcomp abcd i-D_iD_j\Helmcomp abcd {ij}+D_iD_jD_k\Helmcomp abcd{ijk},\label{eq:HelmIntCond0}\\
\Helmcomp abcd i-\Helmcomp cdab i &=2D_j\Helmcomp abcd{ij}-3D_jD_k\Helmcomp abcd{ijk},\label{eq:HelmIntCond1}\\
\Helmcomp abcd{ij} + \Helmcomp cdab{ij}&=3D_k\Helmcomp abcd{ijk},\label{eq:HelmIntCond2}\\
\Helmcomp abcd{ijk} - \Helmcomp cdab{ijk}&=0\label{eq:HelmIntCond3}.
\end{align}
\end{subequations}
\end{proposition}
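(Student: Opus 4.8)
The plan is to deduce all four relations \eqref{eq:HelmIntCond0}--\eqref{eq:HelmIntCond3} at once from a single ``master identity'' satisfied by the components of the \He operator of an \emph{arbitrary} source form, and then to specialize this identity to $|I|=0,1,2,3$, using that for a third order source form $\Helmcomp abcd I=0$ whenever $|I|\geq 4$ (the remark following \eqref{eq:HelmComp}). The only non-elementary input is the classical reconstruction formula for the higher Euler operators, which is the exact inverse of their defining formula recorded just after \eqref{eq:HelmComp}: for every differential function $F$ of finite order and every multi-index $I$,
\[
\wpartial cd I F=\sum_{|K|\geq 0}\binom{|I|+|K|}{|I|}\,D_K\!\left(\ELho cd{IK}(F)\right),
\]
the sum being finite; this can be quoted from \cite{And} or proved by induction on $|I|$ using the commutation rule \eqref{eq:totaldercomm}.

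First I would rewrite \eqref{eq:HelmComp} with the index pairs exchanged, $\Helmcomp cdab I=\wpartial ab I\Tstand cd-(-1)^{|I|}\ELho cd I(\Tstand ab)$, substitute the reconstruction formula into $\wpartial ab I\Tstand cd$, and then in each summand eliminate $\ELho ab{IK}(\Tstand cd)$ by means of \eqref{eq:HelmComp} once more, $\ELho ab{IK}(\Tstand cd)=(-1)^{|I|+|K|}\big(\wpartial cd{IK}\Tstand ab-\Helmcomp abcd{IK}\big)$. The terms carrying $\wpartial cd{IK}\Tstand ab$ reassemble, by the definition of $\ELho cd I$, into $(-1)^{|I|}\ELho cd I(\Tstand ab)$, which cancels the remaining term of \eqref{eq:HelmComp}; absorbing $(-1)^{|K|}D_K=(-D)_K$ then leaves the master identity
\[
\Helmcomp cdab I=(-1)^{|I|+1}\sum_{|K|\geq 0}\binom{|I|+|K|}{|I|}\,(-D)_K\,\Helmcomp abcd{IK},
\]
valid for an arbitrary source form.

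It then remains to specialize this identity to $|I|=0,1,2,3$ and to discard the components $\Helmcomp abcd{IK}$ with $|I|+|K|\geq 4$. For $|I|=3$ only the $|K|=0$ term survives and gives \eqref{eq:HelmIntCond3}; for $|I|=2$ the terms $|K|=0,1$ survive and give \eqref{eq:HelmIntCond2}; for $|I|=1$ the terms $|K|=0,1,2$ survive and give \eqref{eq:HelmIntCond1}; and for $|I|=0$ the terms $|K|=0,1,2,3$ survive and give \eqref{eq:HelmIntCond0}. In each case the binomial coefficients $\binom{|I|+|K|}{|I|}$ supply precisely the numerical factors $2$ and $3$ appearing in \eqref{eq:HelmIntCond0}--\eqref{eq:HelmIntCond3}, while the sign $(-1)^{|I|+1}$ produces the symmetric combination $\Helmcomp abcd I+\Helmcomp cdab I$ for $|I|$ even and the antisymmetric combination $\Helmcomp abcd I-\Helmcomp cdab I$ for $|I|$ odd on the left.

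I do not anticipate any genuine obstacle here: once the reconstruction formula is granted, the passage to the master identity is a short sign-and-binomial manipulation, and its specialization is routine bookkeeping together with the order-three truncation; the only mild point of care is tracking which index pair occupies which slot, which the exchange built into \eqref{eq:HelmComp} above handles automatically. (Should one wish to avoid invoking the reconstruction formula, the four relations can instead be checked by a direct, though substantially longer, computation from \eqref{eq:HelmComp}.)
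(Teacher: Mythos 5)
Your proposal is correct, and it reaches the four identities by a genuinely different route than the paper. The paper's proof is a direct, case-by-case computation: it writes out the truncated expressions \eqref{eq:HelmComp} for a third order source form (starting from $\Helmcomp abcd{ijk}=\wpartial cd{ijk}\T ab+\wpartial ab{ijk}\T cd$) and verifies \eqref{eq:HelmIntCond3}, \eqref{eq:HelmIntCond2}, \eqref{eq:HelmIntCond1}, \eqref{eq:HelmIntCond0} in turn by explicit manipulation. You instead first establish the order-independent master identity
\begin{equation*}
\Helmcomp cdab I=(-1)^{|I|+1}\sum_{|K|\geq 0}\tbinom{|I|+|K|}{|I|}\,(-D)_K\,\Helmcomp abcd{IK},
\end{equation*}
whose derivation I checked: substituting the inversion formula $\wpartial ab I F=\sum_{|K|\geq0}\binom{|I|+|K|}{|I|}D_K\bigl(\ELho ab{IK}(F)\bigr)$ into \eqref{eq:HelmComp} with the index pairs exchanged, eliminating $\ELho ab{IK}(\T cd)$ via \eqref{eq:HelmComp} again, and recognizing the definition of $\ELho cd I(\T ab)$ does produce exactly this identity, and its truncation at $|I|+|K|\leq3$ reproduces the stated signs and the factors $2$, $3$ in \eqref{eq:HelmIntCond0}--\eqref{eq:HelmIntCond3}. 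Your route is essentially a concrete component form of the remark following the Proposition in the paper, that the conditions follow from $\delta_V^2=0$; what it buys is generality (the master identity holds for source forms of any order, so the integrability conditions at all orders come out uniformly) at the cost of invoking the reconstruction formula for the higher Euler operators, which is indeed standard and available in the cited reference, though you should note that its binomial coefficients are tied to the symmetrized weighted derivatives \eqref{eq:wpd} and the commutation rule \eqref{eq:totaldercomm}, so a one-line induction check in that convention is worth including if you do not simply quote it. The paper's computation, by contrast, is elementary and self-contained but specific to order three.
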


\begin{proof} Equations \eqref{eq:HelmIntCond} can be derived by direct computation using the expressions \eqref{eq:HelmComp} for $\Helmgcstand$. Since $\Tsf$ of order 3, these yield
\begin{equation}\label{eq:ThirdOrderHe}
\Helmcomp abcd{ijk} = \wpartial cd{ijk}\T ab + \wpartial ab{ijk}\T cd,
\end{equation}
which immediately implies equation \eqref{eq:HelmIntCond3}. Likewise, by \eqref{eq:HelmComp},
\[
\Helmcomp abcd{ij}+\Helmcomp cdab{ij}= 3D_k(\wpartial ab{ijk}\T cd+\wpartial cd{ijk}\T ab)=3D_k\Helmcomp abcd{ijk},
\]
so that \eqref{eq:HelmIntCond2} holds true. Next we compute
\[
\begin{split}
\Helmcomp abcd{i}&-\Helmcomp cdab{i}\\
&= 2D_j(\wpartial cd{ij}\T ab-\wpartial ab{ij}\T cd) +3D_jD_k(\wpartial ab{ijk}\T cd-\wpartial cd{ijk}\T ab)\\
&=2D_j(\wpartial cd{ij}\T ab-\wpartial ab{ij}\T cd+3D_k\wpartial ab{ijk}\T cd)
-3D_jD_k(\wpartial ab{ijk}\T cd+\wpartial cd{ijk}\T ab)\\
&=2D_j\Helmcomp abcd{ij}-3D_jD_k\Helmcomp abcd{ijk},
\end{split}
\]
which yields \eqref{eq:HelmIntCond1}. Finally,
\[
\begin{split}
\Helmcompo abcd&+\Helmcompo cdab = D_i(\wpartial cd i\T ab+\wpartial ab i\T cd)\\
&\quad- D_iD_j(\wpartial cd{ij}\T ab+\wpartial ab{ij}\T cd)+D_iD_jD_k(\wpartial cd{ijk}\T ab+\wpartial ab{ijk}\T cd)\\
&=D_i(\wpartial cd i\T ab+\wpartial ab i\T cd-2D_j\wpartial ab{ij}\T cd+3D_jD_k\wpartial ab{ijk}\T cd)\\
&\quad - D_iD_j(\wpartial cd{ij}\T ab-\wpartial ab{ij}\T cd +3D_k\wpartial ab{ijk}\T cd)
+D_iD_jD_k(\wpartial cd{ijk}\T ab +\wpartial ab{ijk}\T cd)\\
&=D_i\Helmcomp abcd i-D_iD_j\Helmcomp abcd{ij}+D_iD_jD_k\Helmcomp abcd{ijk},
\end{split}
\]
which proves the first identity \eqref{eq:HelmIntCond0}.
\end{proof}

\begin{remark}
The \He operator $\Helm$ can also be characterized in terms of the differential $\delta_V$
of the variational bicomplex \cite{And} as $\Helm = \delta_V\Tsf$, whereby the foregoing conditions ensue from the general identity $\delta_V^2=0$. 
\end{remark}

The following Lie derivative formula, as established in \cite{And,AnPo94}, is central in the proof of our main Theorem.
\smallskip
\begin{proposition} Let $\Tsf$ be a source form and $X$ a projectable vector field
  on $\Gbundle$.  Then
  \begin{equation}\label{eq:Liederformula}
    \cL_{\pr X}\Tsf =
    \ELop(X_{{\textup{ev}}}\ianhook \Tsf) + \Helm(X_{\textup{ev}}).
  \end{equation}
\end{proposition}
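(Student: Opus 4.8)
The plan is to derive \eqref{eq:Liederformula} from the very definition \eqref{eq:Helmdefn} of the \He operator by peeling off the evolutionary part of $X$. For $X=P^i\pd{}{x^i}+Q_{ab}\wpartialo a b$, the prolongation formula \eqref{eq:prolongation} yields
\[
\pr X=\pr X_{\textup{ev}}+X_{\tot},\qquad X_{\tot}:=P^iD_i ,
\]
since $\pr X_{\textup{ev}}=\sum_{|I|\geq0}D_I(X_{\textup{ev},ab})\,\wpartial a b I$ carries no $\pd{}{x^i}$-term. Splitting the Lie derivative accordingly,
\[
\cL_{\pr X}\Tsf=\cL_{\pr X_{\textup{ev}}}\Tsf+\cL_{X_{\tot}}\Tsf ,
\]
and using \eqref{eq:Helmdefn} to rewrite the first summand as $\ELop(X_{\textup{ev}}\ianhook\Tsf)+\Helm(X_{\textup{ev}})$, the proposition reduces to the claim that $\cL_{X_{\tot}}\Tsf$ vanishes when viewed as a source form.

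For that I would appeal to the standard fact that total (horizontal) vector fields act trivially on source forms, cf.~\cite{And}, and recall why: by Cartan's identity $\cL_{X_{\tot}}\Tsf=X_{\tot}\ianhook d\Tsf+d(X_{\tot}\ianhook\Tsf)$, and since $\Tsf=\Tstand a b\,d\gc a b\wedge\dvol$ already carries all $\bdim$ horizontal differentials, $d\Tsf=\delta_V\Tsf$ has vertical degree two; contracting the horizontal field $X_{\tot}$ and separating $d(X_{\tot}\ianhook\Tsf)$ into its horizontal and vertical parts, everything of vertical degree two cancels — as it must, $\cL_{X_{\tot}}\Tsf$ being of source-form type because $X$ is projectable — leaving $\cL_{X_{\tot}}\Tsf=d_H\bigl(X_{\tot}\ianhook\Tsf\bigr)$. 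The interior Euler operator that sends a $(\bdim,1)$-form to its source-form representative annihilates horizontally exact forms, so $\cL_{X_{\tot}}\Tsf=0$ as a source form, which finishes the proof.

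The part I expect to require the most care is the reduction just used: whereas $\cL_{\pr X}\Tsf$ for a projectable $X$ is automatically a genuine source form, $\cL_{\pr X_{\textup{ev}}}\Tsf$ for a general evolutionary field need not come out in the reduced form $(\,\cdot\,)_{ab}\,d\gc a b\wedge\dvol$, so every Lie derivative of a source form must be interpreted after applying the interior Euler operator for \eqref{eq:Helmdefn} to hold verbatim; pinning this down is where the bookkeeping lives. A more pedestrian alternative, avoiding the bicomplex language, is to note that for projectable $X$ the prolonged flow preserves both $\dvol$ and the span of the $d\gc a b$, so $\cL_{\pr X}\Tsf$ is a source form outright, and then to verify equality by matching the $d\gc a b\wedge\dvol$-components of the two sides through repeated integration by parts; I would fall back on this if the structural argument proves awkward to phrase precisely.
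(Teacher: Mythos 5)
The paper does not actually prove this Proposition; it quotes the Lie derivative formula from \cite{And,AnPo94}, so there is no in-text proof to compare yours against line by line. Your route --- splitting $\pr X=\pr X_{\textup{ev}}+P^iD_i$, absorbing the evolutionary part via the definition \eqref{eq:Helmdefn}, and arguing the total part contributes nothing --- is the standard variational-bicomplex argument and is sound in substance. The one point that genuinely requires the care you flag is the reduction itself: read literally, \eqref{eq:Helmdefn} would make \eqref{eq:Liederformula} equivalent to $\cL_{P^iD_i}\Tsf=0$, which is false as an identity of forms; a direct computation gives $\cL_{P^iD_i}\Tsf=d_H\bigl(P^iD_i\ianhook\Tsf\bigr)$, a generally nonzero $(\bdim,1)$-form containing $d\gcj abi\wedge\dvol$ terms. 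For the same reason $\cL_{\pr X_{\textup{ev}}}\Tsf$ is not itself a source form, so \eqref{eq:Helmdefn} is consistent with the component formulas \eqref{eq:Helmcompdefn}--\eqref{eq:HelmComp} only after applying the interior Euler operator (equivalently, modulo horizontally exact $(\bdim,1)$-forms). The clean version of your argument is therefore: apply the interior Euler projection to the decomposition; it fixes $\cL_{\pr X}\Tsf$, which is already a genuine source form precisely because $X$ is projectable; the projection of $\cL_{\pr X_{\textup{ev}}}\Tsf-\ELop(X_{\textup{ev}}\ianhook\Tsf)$ is $\Helm(X_{\textup{ev}})$ with components \eqref{eq:HelmComp}; and the projection kills the horizontally exact term $\cL_{P^iD_i}\Tsf$. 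Since you identify exactly this bookkeeping as the remaining work and offer the component-wise verification as a fallback, I would not call this a gap, but you should correct the phrase ``$\cL_{X_{\tot}}\Tsf$ being of source-form type'': it is of type $(\bdim,1)$ yet is not a source form, and the argument must not treat it as one before projecting.
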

An extension of the Lie derivative formula \eqref{eq:Liederformula} to
non-projectable, generalized vector fields can be found in \cite{And}.  If $\Tsf$ is a locally variational source form, then equation
\eqref{eq:Liederformula} reduces to
\[\cL_{\pr X}\Tsf =\ELop(X_{\text{ev}}\ianhook \Tsf).\]
Now $$\cL_{\pr X}\Tsf = 0,$$ if the source form is
invariant under $X$, that is, if $X$ is a \emph{distinguished symmetry}
of the system $\T ab=0$, while $$\ELop(X_{\text{ev}}\ianhook \Tsf)=0,$$ provided that $X$ generates a conservation law for $\Tsf$; see \cite{And}.  Thus equation
\eqref{eq:Liederformula} furnishes a version of the classical Noether's theorem for
projectable vector fields expressed directly in terms of the system of
differential equations without explicit reference to a Lagrangian.

On the other hand, in the situation of Takens' problem, each vector field $X$ belonging to
the Lie algebra $\vliealg$ is, by prescription, a symmetry of the source form $\Tsf$ and generates a
conservation law for $\Tsf$. These requirements lead to the conditions
\[
\Helm(X_{\text{ev}})=0,\qquad\text{ for
all $X\in\vliealg$},
\]
on the \He operator of $\Tsf$.  The primary objective in the
analysis of Takens' problem is to identify, on mathematical or physical grounds,
interesting classes $\mathcal T$ of source forms ({i.e.,}~differential
equations) and symmetry algebras $\vliealg$ of vector fields so that one will
be able to classify all $\vliealg$-invariant \He operators $\Helm$ with $\Tsf\in\mathcal T$ satisfying the conditions
$\Helm(X_{\text{ev}})=0$, $X\in\vliealg$.

We conclude this section by briefly recalling the construction of the generalized Cotton tensors $\Cotton {}{}P$. For more details and proofs we refer to \cite{And84}. 

Let $\connform ij=\Christoffel ijk dx^k$, $i,j=1,\dots,\bdim$, denote the connection forms of the Riemannian connection associated with a metric $\metric$, and, as usual, write 
\[
\curvform ij = d\connform ij +\connform ik\wedge\connform kj
\]
for the associated curvature form. In terms of the components of the Riemannian curvature tensor,
\[
\curvform ij = \dfrac12\curvtensor jkli dx^k\wedge dx^l.
\]

Next let 
\[
\trace r\mkern1mu\colon \gla\bdim\to\R{},\quad \trace r(\matriisi)=\text{trace}(\matriisi^r),\quad r=1,2,\dots,
\]
denote the elementary polynomials on the Lie algebra $\gla\bdim$ invariant under the adjoint action of the general linear group $\glg\bdim$. As is well known \cite{Spi75}, any $\orthg{p,q}$, $\bdim=p+q$, invariant polynomial $\poly\in\invpoly{2n}{\ortha{p,q}}$ of homogeneous degree $2n$ on the Lie algebra $\ortha{p,q}$ of the orthogonal group $\orthg{p,q}$ can be expressed as $\poly=\polys(\trace 2,\trace 4,\dots,\trace{2s})$ for a uniquely determined polynomial $\polys\colon\R s\to \R{}$.

Next assume that $m=4n-1$. We associate to a polynomial $\poly=\polys(\trace 2,\trace 4,\dots,\trace{2s})\in\invpoly{2n}{\ortha {p,q}}$ the second order differential functions $\CottonEc{abc}{\polysub}\metric$ by 
\[
\CottonEc{abc}{\polysub}\metric\dvol = \ginvc ad\polyc cd(\curvform{}{})\wedge d\bcoord b,
\]
where the $\polyc cd$ are the components of the derivative of $\poly$, that is,
\[
\frac{d}{dt}\poly(\matriisi+t\matriisiv)\vert_{t=0} = \polyc cd(\matriisi)\matriisiv^d_c.
\]
Now the generalized Cotton tensors are defined by
\[
\Cotton {}{}\polysub(\metric) = \Cotton ab\polysub\mkern1mu d\gc ab\wedge\dvol,
\]
where the components $\Cotton ab\polysub$ are, by definition, given by the covariant derivatives
\[
\Cotton ab\polysub= \frac12\nabla_c(\CottonEc{acb}\polysub\metric+\CottonEc{bca}\polysub\metric).
\]
One can show that $\Cotton {}{}\polysub(\metric)$ is a third order, locally variational natural source form \cite{And84}. When $\bdim=3$ and $\poly=\trace 2$, the Cotton tensor $\Cotton{}{}{\polysub}$ agrees with the well-known Cotton-York tensor, \cite{Cotton1899,York71}.

The following theorem is proved in \cite{And84}.

\begin{theorem}\label{T:cohomology} Let $\Tstand{}{}$ be a natural, locally variational source form. Then $\Tstand{}{}$ can be expressed as
\begin{equation}
\Tstand{}{} =
\begin{cases} \ELop(\lambda),\qquad&\text{if $m\equiv0,1,2\mod 4$;}\\
\ELop(\lambda)+\Cotton{}{}\polysub(\metric),\qquad&\text{if $m\equiv3\mod 4$,}
\end{cases}
\end{equation}
where $\lambda = \Lagr\dvol$ is a natural Lagrangian form and $\Cotton{}{}\polysub$ is the generalized Cotton tensor associated with a uniquely determined $\poly\in\invpoly{\frac{m+1}2}{\ortha{p,q}}$.
\end{theorem}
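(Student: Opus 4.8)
The plan is to recast the statement as a computation in the cohomology of the \emph{natural} (that is, $\difflift$-invariant) variational bicomplex on $\J\infty\Gbundle$. In this language, a natural, locally variational source form is an element of the invariant source-form space $\mathcal F^1_{\mathrm{nat}}$ annihilated by the \He operator, $\Helm(\Tsf)=0$, while the assertion that $\Tsf=\ELop(\lambda)$ for a \emph{natural} Lagrangian form $\lambda$ says precisely that $\Tsf$ lies in the image of $\ELop$ restricted to natural Lagrangians. Thus the theorem is equivalent to determining the cohomology of the natural Euler--Lagrange complex
\[
\cdots\to\Omega^{\bdim,0}_{\mathrm{nat}}\xrightarrow{\;\ELop\;}\mathcal F^1_{\mathrm{nat}}\xrightarrow{\;\Helm\;}\mathcal F^2_{\mathrm{nat}}\to\cdots
\]
at the source-form term $\mathcal F^1_{\mathrm{nat}}$, namely the quotient $\ker\Helm/\mathrm{im}\,\ELop$. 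I want to show that this cohomology vanishes unless $\bdim\equiv 3\bmod 4$, and that it is otherwise identified with $\invpoly{(\bdim+1)/2}{\ortha{p,q}}$ through the assignment $\poly\mapsto[\,\Cotton{}{}\polysub\,]$.

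The first main step is to linearize the diffeomorphism invariance by \emph{covariantization}. Exploiting the local transitivity of the prolonged action $\difflift$ on the fibres of $\J\infty\Gbundle$ (Riemann normal coordinates), every natural differential form can be written uniquely as an $\orthg{p,q}$-invariant polynomial expression in the curvature components $\curvtensor jkli$ and their iterated covariant derivatives, the metric entering only to raise, lower and contract indices. Under this identification the horizontal differential $d_H$ becomes the exterior covariant derivative, so the natural horizontal complex is converted into the Chern--Weil complex of $\orthg{p,q}$-invariants built from the connection and curvature forms $\connform ij$, $\curvform ij$. This reduction is the device that turns the infinite-dimensional invariance problem into the finite-dimensional invariant theory of the orthogonal group.

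The second main step is the cohomology computation itself. By Chern--Weil theory the closed natural $(\bdim+1)$-forms in the reduced complex, modulo exact ones, are represented by the Pontryagin forms $\poly(\curvform{}{})$ obtained from $\orthg{p,q}$-invariant polynomials $\poly=\polys(\trace 2,\trace 4,\dots)$ evaluated on the curvature; such a form is nonzero exactly when $\bdim+1$ is a multiple of $4$, equivalently $\bdim\equiv 3\bmod 4$, in which case $\poly$ has homogeneous degree $(\bdim+1)/2$. Each such Pontryagin $(\bdim+1)$-form is $d_H$-exact through a Chern--Simons $\bdim$-form $\eta$ with $d_H\eta=\poly(\curvform{}{})$, but $\eta$ is \emph{not} a natural Lagrangian form; the obstruction to trivializing it invariantly is measured, at the source-form level, exactly by the Euler--Lagrange expression $\ELop(\eta)=\Cotton{}{}\polysub$. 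This identifies $\ker\Helm/\mathrm{im}\,\ELop$ with $\invpoly{(\bdim+1)/2}{\ortha{p,q}}$: it vanishes when $\bdim\not\equiv3\bmod 4$, and in the exceptional dimensions it is spanned by the Cotton tensors $\Cotton{}{}\polysub$, which is the content of the theorem. Uniqueness of $\poly$ then follows from the injectivity of the assignment $\poly\mapsto[\,\Cotton{}{}\polysub\,]$ on $\invpoly{(\bdim+1)/2}{\ortha{p,q}}$, reflecting the injectivity of the Chern--Weil map in the relevant degree.

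The hardest part is the cohomology computation in the second step: establishing that the reduced complex carries \emph{no} cohomology in degree $\bdim+1$ beyond the Chern--Weil/Pontryagin classes, so that no exotic invariant cocycles intervene. This rests on (i) verifying that covariantization is an isomorphism onto the $\orthg{p,q}$-invariant theory of the curvature and its covariant derivatives, compatible with $d_H$ up to homotopy, so that no invariant data is lost or spuriously created; and (ii) the full invariant theory of $\orthg{p,q}$ acting on that tower, which guarantees that the characteristic forms exhaust the closed invariants. The most delicate concrete point is controlling the interaction of $\ELop$ and $\Helm$ with the transgression: one must verify explicitly that $\ELop(\eta)=\Cotton{}{}\polysub$ and, crucially, that the resulting class is \emph{nontrivial}, i.e. that $\Cotton{}{}\polysub$ is not the Euler--Lagrange expression of any natural Lagrangian. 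This last nonvanishing is exactly what makes the case $\bdim\equiv3\bmod 4$ genuinely different, and it is where the substance of the argument in \cite{And84} resides.
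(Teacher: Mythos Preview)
The paper does not actually prove this theorem: immediately before the statement it says ``The following theorem is proved in \cite{And84},'' and no argument is supplied. So there is no in-paper proof to compare against.

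Your outline is a faithful high-level summary of the strategy of \cite{And84}: interpret the question as computing $\ker\Helm/\mathrm{im}\,\ELop$ in the $\difflift$-invariant Euler--Lagrange complex, covariantize via normal coordinates to reduce natural forms to $\orthg{p,q}$-invariant polynomials in curvature and its covariant derivatives, and then invoke Chern--Weil theory so that the only obstructions are Pontryagin classes in degree $\bdim+1$, transgressed by Chern--Simons forms whose Euler--Lagrange expressions are the Cotton tensors $\Cotton{}{}\polysub$. You yourself flag that the substantive verifications---that covariantization is a chain map up to homotopy, that the invariant cohomology in the relevant degree is exhausted by characteristic classes, and that $\Cotton{}{}\polysub$ is genuinely not $\ELop$ of a natural Lagrangian---are carried out in \cite{And84} rather than here. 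As a proof \emph{proposal} this is the right architecture, but it is a roadmap rather than a proof: each of the three items you list under ``the hardest part'' is a nontrivial theorem in its own right, and none is established within your text. That is consistent with how the present paper treats the result, namely as a black box imported from \cite{And84}.
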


\section{Symmetries and conservation laws}
\label{sec:symcl}

In this section we analyze the relationship between the diffeomorphisms symmetries \eqref{eq:TensorDensity} and the differential constraints \eqref{eq:CovarDiv} expressing divergence-freeness in metric field theories. As is well known, the
commutation formula \eqref{eq:EulerEquivariance}, together with Noether's
theorems, implies that the Euler-Lagrange expression $\Tsf=\ELop(\lambda)$ of a Lagrangian form $\lambda$ with symmetries \eqref{E:LScalarDensity} possesses symmetries \eqref{eq:TensorDensity} and is, in addition, constrained by the identities \eqref{eq:CovarDiv}. The following result, which bypasses the Lagrangian and is stated directly in terms of the system of differential equations, is a slight but non-vacuous extension of the above conclusions furnished by the classical Noether's theorems; see \cite{And78b}.
\smallskip
\begin{proposition}
\label{pro:TlocvarSymCL}
Suppose that the source form $\Tsf=\Tstand ab(\gccoll k) d\gc ab\wedge\dvol$ is locally variational.
Then $\Tsf$ is natural, $\lieder{\pr\liftedvf{\vfc{}}}\Tsf=0$ for all $\liftedvf{\vfc{}}\in\difflift$, if and only if it is divergence-free, $D_b \Tstand a b + \Christoffel a b c \Tstand b c=0$.
\end{proposition}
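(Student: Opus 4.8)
The plan is to combine the Lie derivative formula \eqref{eq:Liederformula} with an off-shell integration by parts. Since $\Tsf$ is locally variational its \He operator vanishes, $\Helm\equiv0$, so for every $\bvf\in\vfs{\R\bdim}$ the associated projectable field $\liftedvf\bvf\in\difflift$ satisfies, by \eqref{eq:Liederformula},
\[
\cL_{\pr\liftedvf\bvf}\Tsf=\ELop\big((\liftedvfevc\bvf ab\,\Tstand ab)\,\dvol\big),
\]
since the evolutionary form of $\liftedvf\bvf$ has components $\liftedvfevc\bvf ab$. Thus naturality of $\Tsf$ is \emph{equivalent} to the vanishing of $\ELop\big((\liftedvfevc\bvf ab\,\Tstand ab)\,\dvol\big)$ for every $\bvf$, and the task becomes to identify this condition with the constraint \eqref{eq:CovarDiv}.

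The bridge is the elementary identity, valid for an arbitrary symmetric $\Tstand ab$ and, writing $\xi_b=\gc bc\vfc c$,
\[
\liftedvfevc\bvf ab\,\Tstand ab=2\,\xi_b\,\mathcal B^b-2\,D_a\big(\xi_b\Tstand ab\big),\qquad\mathcal B^b:=D_a\Tstand ab+\Christoffel bac\Tstand ac,
\]
which I would derive by expanding the explicit expression for $\liftedvfevc\bvf ab$, integrating by parts in $D_a$, and collecting the remaining metric-derivative terms into Christoffel symbols by means of the symmetry $\Tstand ab=\Tstand ba$; up to that symmetry and a relabelling of the free index, $\mathcal B^b$ is precisely the left side of \eqref{eq:CovarDiv}. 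Applying $\ELop$ to this identity and discarding the total divergence, the previous paragraph shows that $\Tsf$ is natural if and only if $\ELop\big((\gc bc\vfc c\,\mathcal B^b)\,\dvol\big)=0$ for every $\bvf$. The implication ``divergence-free $\Rightarrow$ natural'' is then immediate, since $\mathcal B^b\equiv0$ trivially annihilates this expression; for this direction one could alternatively appeal to Theorem~\ref{T:cohomology}, but the argument here treats both directions uniformly.

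It remains --- and this is the only step I expect to be non-routine, being essentially the converse to Noether's second theorem --- to show that $\ELh ef{\gc bc\vfc c\,\mathcal B^b}=0$ for every $\vfc c=\vfc c(\bcoord i)$ forces $\mathcal B^b\equiv0$. Because $\vfc c$ carries no jet dependence, the Leibniz rule for the Euler operator gives $\ELh ef{\vfc c\,G_c}=\sum_{|K|\geq0}(-D)_K(\vfc c)\,\ELho ef K(G_c)$ with $G_c:=\gc bc\,\mathcal B^b$; since $G_c$ has finite order and the jets $(D_K\vfc c)$ may be prescribed arbitrarily at a point, I would conclude $\ELho ef K(G_c)=0$ for every multi-index $K$ and all indices $c,e,f$. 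If $\mathcal B^b$ had positive order $r$, then choosing $|K|=r$ collapses the higher Euler operator to a plain derivative, $\ELho ef K(G_c)=\wpartial ef K G_c=\gc bc\,\wpartial ef K\mathcal B^b$, and its vanishing for all $c$, together with invertibility of $(\gc bc)$, would give $\wpartial ef K\mathcal B^b=0$ for all $e,f$ with $|K|=r$, contradicting the meaning of $r$. Hence $\mathcal B^b$ depends on the undifferentiated components $\gc cd$ alone. But each term of $\mathcal B^b=D_a\Tstand ab+\Christoffel bac\Tstand ac$, read off from \eqref{eq:totalder} and the definition of the Christoffel symbols, carries an explicit factor $\gcj cd I$ with $|I|\geq1$, so $\mathcal B^b$ vanishes once all such variables are set to zero and is therefore identically zero. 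This establishes \eqref{eq:CovarDiv} and completes the equivalence.
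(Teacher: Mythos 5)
Your argument is correct, and its first half coincides with the paper's: local variationality kills the Helmholtz operator, the Lie derivative formula \eqref{eq:Liederformula} reduces naturality to the vanishing of $\ELop\big((\liftedvfevc{\vfc{}}ab\,\T ab)\,\dvol\big)$, and an integration by parts (your pointwise identity $\liftedvfevc{\vfc{}}ab\T ab=2\xi_b\mathcal B^b-2D_a(\xi_b\T ab)$, which checks out) turns this into the vanishing of $\ELh hk{\vfc c\gc bc(D_a\T ab+\Christoffel bcd\T cd)}$ for all $\vfc{}$ --- exactly the paper's identity \eqref{E:NaturalVSDivFree}, making the direction ``divergence-free $\Rightarrow$ natural'' immediate. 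Where you genuinely diverge is the converse step. The paper argues by contradiction: it constructs a specific $\vfc{}$ whose derivatives at a base point are prescribed against the metric, evaluates the Euler--Lagrange expression there to conclude that the covariant divergence is a function of $x$ alone, and then uses translation invariance plus evaluation on a constant metric to make it vanish. You instead expand $\ELh hk{\vfc cG_c}=\sum_{|K|\geq0}(-D)_K\vfc c\;\ELho hk K(G_c)$ --- legitimate here since $\vfc c$ depends on $x$ only, and consistent with the paper's normalization of the higher Euler operators --- use the arbitrariness of the derivatives of $\vfc{}$ at a point to get $\ELho hk K(G_c)=0$ for every $K$, kill all positive-order dependence through the top-order collapse $\ELho hk K(G_c)=\wpartial hk KG_c=\gc bc\wpartial hk K\mathcal B^b$ and invertibility of $(\gc bc)$, and dispose of the residual zeroth-order dependence by the structural remark that every term of $D_a\T ab+\Christoffel bac\T ac$ carries a factor $\gcj cdI$ with $|I|\geq1$ (this tacitly uses that $\T ab$ has no explicit $x$-dependence, which is part of the hypothesis and also follows from translation invariance). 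Both routes exploit the same freedom in the jets of $\vfc{}$ at a point, but yours packages it through the higher Euler operators and handles the zeroth-order case directly, avoiding the paper's detour through ``a function of $x$ only, hence constant, hence zero on the flat metric''; the paper's version, in exchange, is more self-contained in that it never invokes the Leibniz-type expansion of the Euler operator, only its definition. No gap --- just be sure to actually record the two identities you defer (the pointwise integration-by-parts identity and the expansion of $\ELop$ against a base-only coefficient) in a written version.
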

\smallskip
\begin{proof}
  By assumption the \He operator $\Helm$ of $\Tsf$ vanishes, and so equation
  \eqref{eq:Liederformula} reduces to
  \begin{equation}\label{eq:RedLiederformula}
    \cL_{\pr\liftedvf{\vfc{}}}\Tsf =
    \ELop(\liftedvf{\vfc{},\text{ev}}\ianhook \Tsf)=-\ELh hk{2\vfc{c}_{,a}\gcl bc\Tstand ab+\vfc{c}\gcjl abc\Tstand ab}d\gc hk\wedge\dvol.
\end{equation}
Recall that the Euler-Lagrange operator annihilates total derivative expressions, $\ELop(D_aF)=0$ for all differential functions $F=F(\bcoord i,\gccoll k)$ and $a=1,\dots,\bdim$. This permits us to integrate the right-hand side of \eqref{eq:RedLiederformula} by parts to conclude that
\begin{equation*}
\begin{split}
\ELh hk{2\vfc{c}_{,a}&\gcl bc\Tstand ab+\vfc c\gcjl abc\Tstand ab}d\gc hk\wedge\dvol
\\
&=
-\ELh hk{\vfc c(2\gc bc D_a\Tstand ab+2\gcj bc a\Tstand ab-\vfc c\gcj abc\Tstand ab)}d\gc hk\wedge\dvol\\&=-2\ELh hk{\vfc c\gc bc(D_a\Tstand ab+\Christoffel b cd\Tstand cd)}d\gc hk\wedge\dvol.
\end{split}
\end{equation*}
Hence
\begin{equation}\label{E:NaturalVSDivFree}
\cL_{\pr\liftedvf{\vfc{}}}\Tsf =2\ELh hk{\vfc c\gc bc(D_a\Tstand ab+\Christoffel b cd\Tstand cd)}d\gc hk\wedge\dvol,
\end{equation}
which immediately establishes the Proposition in one direction.

It then remains to
prove that the condition $\cL_{\pr\liftedvf{\vfc{}}}\Tsf=0$ for all $\vfc{}\in\vfs{\R\bdim}$
implies that $D_a\Tstand ab+\Christoffel b cd\Tstand cd=0$.
For this, suppose that for some index $b_o$,
$D_a\Tstand a{b_o}+\Christoffel {b_o} cd\Tstand cd=F(\bcoord i,\gccoll l)$ is of order $l$ and that for some
$h$, $k$ and $J$ with $|J|=l$, \[(\wpartial hk J F)(\bcoordo i,\gccollo l)\neq 0, \qquad(\bcoordo i,\gccollo l)\in\J l \Gbundle.\]
Now choose
$\vfc c_o\in\vfs{\R\bdim}$ such that
\[
\dfrac{\partial^{|J|}\vfc c_o}{\partial
x^J}(x^i_o)\gco bc=\delta_{bb_o}, \qquad
\dfrac{\partial^{|K|}\vfc c_o}{\partial x^K}(x^i_o)\gco bc=0,
\quad\mbox{$K\neq J$}.\] Let $(\bcoordo i,\gccollo{2l})\in\J{2l}\Gbundle$ be any point projecting to $\gccollo l$. Then
\[
\begin{split}
&\EL hk{\vfc c_o\gc bc(D_a\Tstand ab+\Christoffel b cd\Tstand cd)}(\bcoordo i,\gccollo{2l})\\
    &\quad=\sum_{|I|\leq l}(-1)^{|I|}D_{I}\left(\wpartial hk I (\vfc c_o\gc bc(D_a\Tstand ab+\Christoffel b cd\Tstand cd))\right)(\bcoordo i,\gccollo{2l})
= (\wpartial hk JF)(\bcoordo i,\gccollo{l})\neq 0,
  \end{split}
  \]
  which is a contradiction. Thus
  $D_a\Tstand a{b}+\Christoffel {b} cd\Tstand cd=h^b(\bcoord i)$ are functions of $x^i$ only. But due to translational invariance of the source form $\Tsf$, each $h^b$ must be constant.  Finally, by the definition of 
  the total derivative operators \eqref{eq:totalder} and the Chrisoffel symbols $\Christoffel b cd$, the expressions $D_a\Tstand a{b}+\Christoffel {b} cd\Tstand cd$ vanish when evaluated on the jet of the constant metric $\metric = \diag(1,\dots,1,-1,\dots,-1)$, showing that $h^b=0$.
\end{proof}

Recall that a source form $\Tsf=\T abd\gc ab\wedge\dvol$ admits translational conservation laws if 
\[
\ELop(\gcj ab p\Tstand ab)=0,\qquad\text{ for all $1\leq p\leq\bdim$.}
\]

\smallskip
\begin{proposition}\label{prop:symS1S2consC1}
  Suppose that a source form $\Tsf$ is natural and admits translational conservation laws. Then the covariant divergence of $\Tsf$ vanishes, $D_b\Tstand ab+\Christoffel a bc\Tstand bc=0$.
\end{proposition}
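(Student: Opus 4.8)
The plan is to reverse-engineer the equivalence established in Proposition~\ref{pro:TlocvarSymCL}, but without assuming that $\Tsf$ is locally variational. The key observation is that the identity \eqref{E:NaturalVSDivFree} required the \He operator to vanish, so it cannot be used directly here; instead I would exploit the Lie derivative formula \eqref{eq:Liederformula} in the \emph{other} direction. Concretely, for $\vfc{}\in\vfs{\R\bdim}$ the translational part of the lifted vector field $\liftedvf{\vfc{}}$ is what carries the information: I would first specialize to the constant vector fields $\vfc{}=\partial/\partial x^p$, for which $\liftedvf{\vfc{}}=\partial/\partial x^p$ has evolutionary form $\liftedvf{\vfc{},\textup{ev}}=-\gcj abp\,\partial^{ab}$. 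The hypothesis that $\Tsf$ admits translational conservation laws says precisely that $\ELop(\liftedvf{\vfc{},\textup{ev}}\ianhook\Tsf)=\ELop(-\gcj abp\,\Tstand ab\,\dvol)=0$. Feeding this into \eqref{eq:Liederformula} gives
\[
\cL_{\pr(\partial/\partial x^p)}\Tsf = \Helm(\liftedvf{\vfc{},\textup{ev}}) = -\sum_{|I|\geq0}(D_I\gcj cdp)\,\Helmgcstand\,d\gc ab\wedge\dvol.
\]
On the other hand, since $\Tsf$ is natural it is in particular translation invariant, so the left-hand side is zero. This yields a system of linear relations $\sum_{|I|\geq0}(D_I\gcj cdp)\,\Helmgcstand=0$ among the \He components, valid for every $p$.

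The next step is to extract divergence-freeness from these relations. Here I would combine the translational identity above with naturality under the \emph{full} lifted algebra, not just translations. Applying \eqref{eq:Liederformula} to a general $\liftedvf{\vfc{}}\in\difflift$ and using $\cL_{\pr\liftedvf{\vfc{}}}\Tsf=0$ (naturality), we get $\ELop(\liftedvf{\vfc{},\textup{ev}}\ianhook\Tsf)=-\Helm(\liftedvf{\vfc{},\textup{ev}})$. Now $\liftedvf{\vfc{},\textup{ev}}=-2\vfc c_{,(a}\gcl bc\,\partial^{ab}-\vfc c\gcjl abc\,\partial^{ab}$, so integrating the left-hand side by parts exactly as in the proof of Proposition~\ref{pro:TlocvarSymCL} (that computation used only the properties of $\ELop$, not the \He conditions) produces
\[
2\ELh hk{\vfc c\gc bc(D_a\Tstand ab+\Christoffel b cd\Tstand cd)}d\gc hk\wedge\dvol = -\Helm(\liftedvf{\vfc{},\textup{ev}}).
\]
Thus it suffices to show the right-hand side vanishes for every $\vfc{}$. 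The \He operator $\Helm$ is itself a natural object built from the natural source form $\Tsf$, so $\Helm(\liftedvf{\vfc{},\textup{ev}})$ transforms equivariantly; but more usefully, I would argue that $\Helm(\liftedvf{\vfc{},\textup{ev}})$ is at every stage a total divergence. Indeed from \eqref{eq:Liederformula} with $X=\liftedvf{\vfc{}}$ and the naturality $\cL_{\pr\liftedvf{\vfc{}}}\Tsf=0$, we read off $\Helm(\liftedvf{\vfc{},\textup{ev}}) = -\ELop(\liftedvf{\vfc{},\textup{ev}}\ianhook\Tsf)$, and the image of $\ELop$ applied to a contracted source form is, by the integration-by-parts identity just displayed, a specific Euler-Lagrange expression that we must show is zero.

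To close the argument I would return to the translational relations $\sum_{|I|\geq0}(D_I\gcj cdp)\,\Helmgcstand=0$. Since $\Tsf$ is natural (hence $\Helm$ natural), the \He components $\Helmgcstand$ form a natural tensorial object; evaluating the translational relations, together with the $\orthg{p,q}$-equivariance coming from the rotational part of $\difflift$ via Proposition~\ref{pro:Lieder}, pins down enough structure to force $\ELop(\liftedvf{\vfc{},\textup{ev}}\ianhook\Tsf)=0$ for all $\vfc{}$. The cleanest route is: by the translational identity $\cL_{\pr(\partial/\partial x^p)}\Tsf=0$ combined with the conservation-law hypothesis we already have $\Helm(\gcj cdp\partial^{cd})=0$; writing a general $\liftedvf{\vfc{},\textup{ev}}$ as a $\difflift$-combination built from these translational evolutionary fields and using the equivariance of $\Helm$ under $\pr\liftedvf{\vfc{}}$ (equation \eqref{eq:HelmCompInv} of Proposition~\ref{pro:Lieder}) propagates the vanishing from the translational generators to all of $\difflift$, giving $\Helm(\liftedvf{\vfc{},\textup{ev}})=0$ and hence, via the displayed identity, $D_a\Tstand ab+\Christoffel b cd\Tstand cd=0$. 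The main obstacle I anticipate is this last propagation step: showing that vanishing of $\Helm$ on the translational evolutionary fields, plus equivariance, actually forces vanishing on the full lifted algebra, rather than merely on some invariant subspace — this requires a careful bookkeeping of how the Lie bracket $[\liftedvf{\vfc{}},\partial/\partial x^p]$ generates new $x$-dependent coefficients and an induction on the order of the jet variables appearing, much in the spirit of the order-reduction argument at the end of the proof of Proposition~\ref{pro:TlocvarSymCL}.
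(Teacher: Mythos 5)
Your skeleton is the same as the paper's: apply the Lie derivative formula \eqref{eq:Liederformula} to the translations to obtain $\sum_{|I|\geq0}\gcj cd{Ip}\Helmacstand=0$, apply it to a general $\liftedvf{\vfc{}}\in\difflift$ together with the integration-by-parts identity from the proof of Proposition~\ref{pro:TlocvarSymCL} (which indeed uses only properties of $\ELop$), and thereby reduce everything to proving $\Helm(\liftedvf{\vfc{},\textup{ev}})=0$ for all $\vfc{}$. The genuine gap is that this reduction is where your argument stops: the propagation step you yourself flag as ``the main obstacle'' is the heart of the proof, and the plan you sketch for it does not work as stated. A general $\liftedvf{\vfc{},\textup{ev}}$ is not a $\difflift$-combination of the translational evolutionary fields in any sense that transports the vanishing of $\Helm$ by equivariance alone: the translations form an abelian subalgebra, and bracketing them against general elements of $\difflift$ produces fields on which you do not yet know that $\Helm$ vanishes; nor is any induction on the order of the jet variables involved.

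The paper closes this step by a short direct computation which your proposal is missing. Apply $\pr\liftedvf{\vfc{}}$ to the translational identity $\sum_{|I|\geq0}\gcj cd{Ip}\Helmacstand=0$, substitute the component equivariance \eqref{eq:HelmCompInv} (specialized to $X=\liftedvf{\vfc{}}$, Proposition~\ref{pro:Lieder}) for $\pr\liftedvf{\vfc{}}(\Helmacstand)$, and use the translational identity again to remove the terms proportional to $\vfc{(a}_{,e}$ and $\vfc e_{,e}$ as well as to convert the double sum via the definition \eqref{eq:totalder} of $D_p$. What survives is
\begin{equation*}
\sum_{|I|\geq0}\Bigl(\liftedvf{\vfc{},cd,Ip}-D_p\liftedvf{\vfc{},cd,I}\Bigr)\Helmcomp abcdI
+\sum_{|I|\geq0}\pd{\liftedvf{\vfc{},cd,I}}{\bcoord p}\Helmcomp abcdI=0,
\end{equation*}
and by the prolongation formula \eqref{eq:prolongation} the parenthesis equals $-\vfc q_{,p}\gcj cd{Iq}$, whose contribution dies by the translational identity once more; one is left with $\sum_{|I|\geq0}\liftedvf{\partial\vfc{}/\partial\bcoord p,cd,I}\Helmcomp abcdI=0$. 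Since $\partial\vfc{}/\partial\bcoord p$ is again an arbitrary vector field, this gives $\sum_{|I|\geq0}\liftedvf{\vfc{},cd,I}\Helmcomp abcdI=0$ for every $\vfc{}$, and a final application of the translational identity turns this into $\Helm(\liftedvf{\vfc{},\textup{ev}})=0$. One further small gap at your finish line: $\Helm(\liftedvf{\vfc{},\textup{ev}})=0$ plus naturality yields only $\ELop\bigl(\vfc c\gc bc(D_a\T ab+\Christoffel bcd\T cd)\bigr)=0$ for all $\vfc{}$; to conclude that the covariant divergence itself vanishes you must still run the argument from the second half of the proof of Proposition~\ref{pro:TlocvarSymCL} (the jet-coefficient and constant-metric evaluation), which you gesture at earlier but do not actually invoke where it is needed.
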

\smallskip
\begin{proof} We will prove this Proposition by showing that 
  \begin{equation}\label{eq:HelmVanish}
    \Helm(\liftedvf{\vfc{},}{}_{\text{ev}})=0\qquad\text{for all $\vfc{}\in\vfs{\R\bdim}$.}
    \end{equation}

  First, with $\vf = \liftedvf{\vfc{}}=\vfc a\pd{}{\bcoord a} -2\vfc c_{,a}\gcl bc
      \wpartialo ab{}\in\difflift$, equation \eqref{eq:HelmCompInv} becomes
\begin{equation}\label{eq:GaugeHelmCond}
{\pr\liftedvf{\vfc{}}}\Helmgcstand
+\sum_{|J|\geq|I|}(\wpartial cd I \liftedvf{\vfc{},ef,J})\Helmcomp ab ef J
-2\vfc {(a}_{,e}\Helmcomp {b)}e cd I
+{\vfc e_{,e}}\Helmacstand=0,
\end{equation}
where $\liftedvf{\vfc{},ef,J}=\pr\liftedvf{\vfc{}}(\gcj ef J)$.
The source form $\Tsf$ is also invariant under $\tvf_p=\partial/\partial\bcoord p$, so the hypothesis and 
  equation \eqref{eq:Liederformula} imply
  \begin{equation}
    \label{eq:TransHelmCond}
    \sum_{|I|\geq0}\gcj cd {Ip}\Helmgcstand=0,\qquad\text{for all $1\leq p\leq\bdim$}.
  \end{equation}
  Now apply the vector field $\pr\liftedvf{\vfc{}}$ to the above equation to see
  that
  \begin{equation}\label{eq:GaugeInvTransHelm}
    \sum_{|I|\geq0}  \liftedvf{\vfc{},cd,Ip} \Helmgcstand
    +\sum_{|I|\geq0}\gcj cd{Ip}\pr\liftedvf{\vfc{}}
    (\Helmacstand) = 0.
  \end{equation}
  Combining~\eqref{eq:GaugeHelmCond} and~\eqref{eq:GaugeInvTransHelm} we obtain
  \[
\begin{split}
  \sum_{|I|\geq0} \liftedvf{\vfc{},cd,Ip} \Helmgcstand &
  -\sum_{|I|\geq0}\gcj cd{Ip} \left(\sum_{|J|\geq|I|}(\wpartial cd I X_{\vfc{},ef,J})\Helmcomp ab ef J
-2\vfc {(a}_{,e}\Helmcomp {b)}e cd I
+{\vfc e_{,e}}\Helmacstand\right)\\
& =\sum_{|I|\geq0}  \liftedvf{\vfc{},cd,Ip} \Helmgcstand -
  \sum_{|I|\geq0}\sum_{|J|\geq|I|}\gcj cd{Ip} (\wpartial cd I X_{\vfc{},ef,J})\Helmcomp ab ef J=0,
  \end{split}
  \]
where we used \eqref{eq:TransHelmCond} twice.
On account of the definition of the total derivative operators
  \eqref{eq:totalder}, the above equation can be written as
  \begin{equation}\label{eq:SimplifyI}
  \sum_{|I|\geq0} \liftedvf{\vfc{},cd,Ip} \Helmgcstand 
  -\sum_{|I|\geq0}D_p \liftedvf{\vfc{},cd,I}\Helmcomp ab cd I
  +\sum_{|I|\geq0}\pd{\liftedvf{\vfc{},cd,I}}{\bcoord p\hfill}\Helmcomp ab cd I=0.
\end{equation}
By the standard prolongation formula \eqref{eq:prolongation},
\[
X_{\vfc{},cd,Ip} = D_p X_{\vfc{},cd,I}-\pd{\vfc q}{\bcoord p}\gcj cd{Iq}.
\]
Thus equation \eqref{eq:SimplifyI} simplifies to
\[
\begin{split}
-\sum_{|I|\geq0}&  \pd{\vfc q}{\bcoord p}\gcj cd{Iq} \Helmgcstand+
\sum_{|I|\geq0}\pd{X_{\vfc{},cd,I}}{\bcoord p\hfill}\Helmcomp ab cd I\\&=\sum_{|I|\geq0}\pd{X_{\vfc{},cd,I}}{\bcoord p\hfill}\Helmcomp ab cd I
=\sum_{|I|\geq0}{X_{\pd{\vfc{}}{\bcoord p\hfill},cd,I}}\Helmcomp ab cd I=0,
\end{split}
\]
where we again used \eqref{eq:TransHelmCond}.
The vector field $\vfc{}\in\vfs{\R\bdim}$ is arbitrary, which allows us to conclude that
  \[
  \sum_{|I|\geq0}{\liftedvf{{\vfc{}},cd,I}}\Helmcomp ab cd I=0,\qquad\text{for all $\vfc{}\in\vfs{\R\bdim}$.}
  \]
But on account \eqref{eq:TransHelmCond},
\[
\sum_{|I|\geq0}{X_{{\vfc{}},cd,I}}\Helmcomp ab cd I=\sum_{|I|\geq0}(D_I{X_{{\vfc{}},\text{ev},cd}}+\vfc q\gcj cd{Iq})\Helmcomp ab cd I=\sum_{|I|\geq0}{D_IX_{{\vfc{}},\text{ev},cd}}\Helmcomp ab cd I=0,
\]
that is, equation \eqref{eq:HelmVanish} holds.

  Due to the $\difflift$-invariance of the source form $\Tsf$ and condition
  \eqref{eq:HelmVanish}, the Lie derivative formula \eqref{eq:Liederformula} now yields
  \[\ELop(\vfc c\gc bc(D_a\Tstand ab+\Christoffel b cd\Tstand cd))=0,\qquad \text{for all $\vfc{c}\in\vfs{\R\bdim}$.}
\]
  Next we continue as in the second part of the proof of
  Proposition~\ref{pro:TlocvarSymCL} to conclude that the source form $\Tsf$ is divergence-free.
\end{proof}

\section{Proof of Theorem~1}
\label{sec:proof}

The proof of Theorem~\ref{th:main1} relies on the following Lemma, which is a
special case of a more general result presented in \cite{AnDu85,Olv83}.
\smallskip
\begin{lemma}\label{L:polT}
  Let $\Tsf=\Tstand a b d\gcstand\wedge\dvol$, $\Tstand a b=\Tstand {(a}{b)}(x^i,\gccoll k)$, be a
  $k$-th order source form, where $k\geq1$, and assume that the covariant divergence $D_b\Tstand a b+\Christoffel a bc\Tstand bc=0$ vanishes identically. Then the component functions $\Tstand a b$ are
  polynomials in the $k$-th order derivative variables $\gcj cd {i_1\cdots
    i_k}$ of degree at most $\bdim-1$, where $\bdim$ is the number of the independent variables.
\end{lemma}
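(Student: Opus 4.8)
The plan is to read off from the divergence identity a purely algebraic constraint on the way $\Tstand ab$ depends on the top-order jet coordinates, and then to iterate that constraint.

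First I would isolate the leading symbol. Since $\Tstand ab$ is of order $k$ and the Christoffel symbols $\Christoffel abc$ are of order one, both $\Christoffel abc\Tstand bc$ and $\partial\Tstand ab/\partial x^b$ are of order at most $k$; on the other hand, in $D_b\Tstand ab=\partial_b\Tstand ab+\sum_{|I|\ge0}\gcj ef{Ib}\wpartial e f I\Tstand ab$ only the summand with $|I|=k$ involves the derivative variables of order $k+1$ (through the factor $\gcj ef{Ib}$), and it does so linearly. Hence the identity $D_b\Tstand ab+\Christoffel abc\Tstand bc\equiv0$ forces the coefficient of each $(k+1)$-st order variable $\gcj cd{i_1\cdots i_{k+1}}$ to vanish, which gives the \emph{symbol condition}
\[
\wpartial c d{(i_1\cdots i_k}\Tstand{|a|}{i_{k+1})}=0,\qquad a,c,d=1,\dots,\bdim,
\]
the round brackets denoting symmetrization over $i_1,\dots,i_{k+1}$ only. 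Equivalently, regarding $\Tstand ab$ as a function of the $k$-th order variables $\gcj cd{i_1\cdots i_k}$ alone, all lower-order coordinates being frozen, its gradient in these variables, symmetrized with the contravariant index, vanishes.

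Next I would propagate this condition under further differentiation. Because the operators $\wpartial{c_s}{d_s}{I_s}$ with $|I_s|=k$ commute with one another and with symmetrization, applying $\wpartial{c_1}{d_1}{I_1}\cdots\wpartial{c_N}{d_N}{I_N}$ to the symbol condition shows that the tensor
\[
\Phi^{b}=\wpartial{c_1}{d_1}{I_1}\cdots\wpartial{c_N}{d_N}{I_N}\,\Tstand ab
\]
—for each fixed choice of the labels $a$ and $c_s,d_s$, with index blocks $I_1,\dots,I_N$ each symmetric of length $k$ and one remaining index $b$—has the property that, for \emph{every} $s=1,\dots,N$, the total symmetrization of the slots in $I_s\cup\{b\}$ is zero, i.e.\ $\Phi$ lies in $(\Sym^k\R\bdim)^{\otimes N}\otimes\R\bdim$ in the intersection of the kernels of these $N$ symmetrizations. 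It then remains to establish the multilinear-algebra fact that such a tensor must vanish as soon as $N\ge\bdim$. Granting this, every $\bdim$-th order partial derivative of $\Tstand ab$ in the $k$-th order variables is zero, so $\Tstand ab$ is a polynomial of degree at most $\bdim-1$ in the $\gcj cd{i_1\cdots i_k}$, which is the assertion of the Lemma.

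The crux is this last, combinatorial, step, and in particular pinning the degree bound to exactly $\bdim-1$. For $k=1$, where each block is a single slot, the condition says that $\Phi$ is antisymmetric under each transposition interchanging slot $s$ with the $b$-slot; conjugating these transpositions forces total antisymmetry in all $N+1$ slots, whence $\Phi\in\Lambda^{N+1}\R\bdim$ vanishes once $N+1>\bdim$. For $k>1$ one needs the corresponding vanishing statement for the relevant Schur functors, which is precisely the ``hyperjacobian''/null-divergence structure theorem of Olver and of Anderson--Duchamp \cite{AnDu85,Olv83}. The most economical route is therefore to invoke that result directly: the reduction of the first step exhibits $D_b\Tstand ab$ as a null divergence modulo terms of order at most $k$, and the cited theorems then express the top-order dependence of $\Tstand ab$ through hyperjacobian-type determinants, which are polynomial of degree at most $\bdim-1$ in the highest derivatives—sharpness being witnessed, already for a single scalar field, by the cofactor matrix of the Hessian. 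I would expect essentially all of the work to lie in this combinatorial ingredient, the rest being bookkeeping.
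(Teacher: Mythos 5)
Your reduction coincides with the paper's first step: isolating the coefficient of the $(k+1)$-st order variables in the divergence identity gives exactly the symbol condition $\wpartial cd {(I}\Tstand {b)} a=0$ of \eqref{eq:polT-1}, and differentiating it shows that any $N$-fold derivative of $\Tstand ab$ with respect to the top-order variables is annihilated by symmetrizing each block $I_s$ with the contravariant index $b$. Where you part ways with the paper is in how this multilinear vanishing statement is settled. The paper disposes of it directly, for all $k$ at once: contract each symmetric block with a decomposable $k$-th power by applying the operators $\copartial {c_s}{d_s}k{X^s}$ and contract the remaining index $b$ with a covector $Y$; the differentiated symbol condition makes the result vanish whenever $Y=X^s$ for some $s$, hence by linearity whenever $Y\in\mathrm{span}\{X^1,\dots,X^\bdim\}$, hence for generic $(X^1,\dots,X^\bdim,Y)$ once $N=\bdim$, hence identically by continuity; polarization in each symmetric block then kills all $\bdim$-fold top-order derivatives. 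Your own argument establishes the needed vanishing only for $k=1$ (the transposition/total-antisymmetry argument, which is correct), and for $k>1$ you fall back on the null-divergence/hyperjacobian structure theorems of Olver and Anderson--Duchamp. That citation is defensible --- the paper itself notes the Lemma is a special case of results in those references, although what you actually produce is a \emph{lower-order} (degenerate) divergence rather than a null divergence, which is the situation those papers treat --- but it outsources precisely the step that carries the degree bound $\bdim-1$, and it is heavier machinery than needed: the generic-spanning-plus-polarization trick above proves your claimed vanishing for every $k$ in a few lines, with no structure theory. So the proposal is correct in outline, identical to the paper in its first half, but its second half is either incomplete as written ($k>1$ is not argued) or complete only by appeal to the literature, whereas the paper's proof is self-contained.
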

\smallskip
\begin{proof}
  By assumption,
  \begin{equation}\label{eq:DivergenceCond}
    \pd{\Tstand a b}{x^b} + \sum_{|I|\geq0}\gcj cd {Ib} \wpartial cd I
    \Tstand a b+
    \Christoffel a cd\T cd = 0.
  \end{equation}
  Now terms in \eqref{eq:DivergenceCond} involving the order $k+1$ variables
  $\gcj  ab J$, $|J|=k+1$, yield the equations
  \begin{equation}\label{eq:polT-1}
  \wpartial cd {(I}\Tstand {b)} a=0.
  \end{equation}
  Write $\copartial b \beta k X$ for the partial differential operator
  \begin{equation}\label{eq:1}
    \copartial ab k X = \sum_{|I|=k} X_I\wpartial a b I
   = \sum\wpartial ab{i_1\cdots i_k}X_{i_1}\cdots X_{i_k},
  \end{equation}
  where $X = (X_1,\dots,X_\bdim)\in(\R \bdim)^*$ is a covector on $\R \bdim$. Then
  equation~\eqref{eq:polT-1} is equivalent to
  \begin{equation}
  \label{eq:divcond}
  X_b \copartial cd k X \Tstand a b = 0\qquad\text{for all $X$.}
  \end{equation}

  Next let $G^{a,c_1d_1\cdots c_\bdim d_\bdim}$ denote the
  mappings
  \begin{displaymath}
    G^{a,c_1d_1\cdots c_\bdim d_\bdim}(X^1,\ldots,X^\bdim,Y)=
    \copartial {c_1} {d_1} k {X^1} \cdots
    \copartial {c_\bdim} {d_\bdim} k {X^\bdim} \Tstand a b Y_b.
  \end{displaymath}
  The operator $\Tstand a b$ is polynomial in $\gcj ab I$, $\abs{I}=k$, of
  degree at most $\bdim-1$ if and only if all the mappings
  $G^{a,c_1d_1\cdots c_\bdim d_\bdim}$ vanish identically. But
  by~\eqref{eq:divcond} and linearity in $Y$, the equation
  \begin{displaymath}
    G^{a,c_1d_1\cdots c_\bdim d_\bdim}(X^1,\ldots,X^\bdim,Y) = 0
  \end{displaymath}
  holds whenever $Y$ is a linear combination of the covectors $X^1$, \dots,
  $X^\bdim$. Consequently $G^{a,c_1d_1\cdots c_\bdim d_\bdim}$
  vanishes for almost all $X^1$, \dots, $X^\bdim$, $Y\in(\R n)^*$. By continuity,
  $G^{a,c_1d_1\cdots c_\bdim d_\bdim}$ must vanish
  identically.
\end{proof}

We next employ the Lie derivative formula \eqref{eq:Liederformula} to derive key identities among the components of the \He operator associated with a third order natural, divergence free source form. On account of these identities and the integrability conditions of Proposition \ref{P:HelmIntCond}, the proof of the first part of  Theorem \ref{th:main1} reduces to showing only that the third order components of the \He operator vanish.

\begin{proposition}\label{P:HelmConsCondO} Suppose that $\Tsf=\T ab d\gc ab\wedge\dvol$ is a third order natural, divergence free source form, that is,
\[
\text{$\lieder{\pr\liftedvf{\vfc{}}}\Tsf=0$,\hspace{5pt} for all $\liftedvf{\vfc{}}\in\difflift$,\quad and\quad $D_b\T ab+\Christoffel abc\T bc=0$.}
\]
Then the components $\Helmcompo abcd$, $\Helmcomp abcdi$, $\Helmcomp abcd{ij}$,  $\Helmcomp abcd{ijk}$ of the \He operator $\Helm$ associated with $\Tsf$ satisfy the following identities.
\openup4pt
\begin{subequations}\label{eq:HelmConsCondO}
\begin{align}
\gcj cde\Helmcompo abcd+\gcj cd{ie}\Helmcomp abcdi+\gcj cd{ije}\Helmcomp abcd{ij}+\gcj cd{ijke}\Helmcomp abcd{ijk}&=0,\label{eq:HelmConsCondO0}\\
2\gc ce\Helmcompo abcf+2\gcj cei\Helmcomp abcfi+\gcj cde\Helmcomp abcdf+2\gcj ce{ij}\Helmcomp abcf{ij}
\qquad\qquad\qquad&\nonumber\\
+2\gcj cd{ie}\Helmcomp abcd{if}+2\gcj ce{ijk}\Helmcomp abcf{ijk}+3\gcj cd{ije}\Helmcomp abcd{ijf}&=0,\label{eq:HelmConsCondO1}\\
2\gc ce\Helmcomp abc{(d}{i)}+4\gcj cej\Helmcomp abc{(d}{i)j}+\gcj cfe\Helmcomp abcf{di}\qquad\qquad\qquad\nonumber\\
+6\gcj ce{jk}\Helmcomp abc{(d}{i)jk}+3\gcj cf{je}\Helmcomp abcf{dij}&=0,\label{eq:HelmConsCondO2}\\
2\gc ce\Helmcomp abc{(d}{ij)}+6\gcj cek\Helmcomp abc{(d}{ij)k}+\gcj cke\Helmcomp abck{ijd}&=0,\label{eq:HelmConsCondO3}\\
\Helmcomp abc{(d}{ijk)}&=0.\label{eq:HelmConsCondO4}
\end{align}
\end{subequations}
\openup-4pt
\end{proposition}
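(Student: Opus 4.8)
The plan is to extract the identities \eqref{eq:HelmConsCondO} directly from the invariance condition \eqref{eq:GaugeHelmCond} for the \He operator under the lifted diffeomorphism vector fields $\liftedvf{\vfc{}}\in\difflift$, combined with the translational condition \eqref{eq:HelmVanish}, which by Proposition \ref{prop:symS1S2consC1} (or rather its proof) is already available since $\Tsf$ is natural and divergence free. Indeed, a divergence-free natural source form admits translational conservation laws, so $\Helm(\liftedvf{\vfc{},}{}_{\text{ev}})=0$ for all $\vfc{}\in\vfs{\R\bdim}$; in components this is the vanishing of $\sum_{|I|\geq0}D_I(\liftedvf{\vfc{},\text{ev},cd})\Helmcomp ab cd I$. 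The first step is to substitute the explicit evolutionary components $\liftedvf{\vfc{},\text{ev},ef}=-2\vfc c_{,(e}\gcl{f)}{c}-\vfc c\gcj efc$ into this relation and expand $D_I$ of this expression using the Leibniz rule, keeping careful track of how the jets $\vfc c_{,J}$ of the vector field appear. Since $\Tsf$ is of order $3$, the \He components $\Helmcomp ab cd I$ vanish for $|I|>3$, so only the terms with $|I|\leq 3$ survive and the expansion is finite.

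The second and central step is to treat the resulting identity as a polynomial identity in the arbitrary Taylor coefficients of $\vfc{}$ at a point, i.e.\ in the independent quantities $\vfc c$, $\vfc c_{,p}$, $\vfc c_{,pq}$, $\vfc c_{,pqr}$, $\vfc c_{,pqrs}$ (all symmetric in the lower indices). Because $\vfc{}$ is arbitrary, the coefficient of each distinct monomial in these variables must vanish separately. Collecting the coefficient of $\vfc c_{,pqrs}$ (the highest-order jet, which multiplies $\gc{}{}$-undifferentiated metric factors times the order-$3$ \He component) yields \eqref{eq:HelmConsCondO4}; collecting the coefficient of $\vfc c_{,pqr}$ yields \eqref{eq:HelmConsCondO3}; similarly the coefficients of $\vfc c_{,pq}$, $\vfc c_{,p}$, and $\vfc c$ give \eqref{eq:HelmConsCondO2}, \eqref{eq:HelmConsCondO1}, and \eqref{eq:HelmConsCondO0} respectively, after symmetrizing over the free indices that are contracted against the symmetric jet $\vfc c_{,p\cdots}$. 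The bookkeeping is where one must be careful: each application of $D_I$ to $\gc ef c$, $\gcj efc$, or to $\vfc c_{,J}$ produces metric jets $\gcj cd{Je}$ of various orders, and these reassemble precisely into the five displayed linear combinations; the symmetrizations in $(d\,i)$, $(d\,ij)$, $(d\,ijk)$ arise because the covector jet $\vfc c_{,\cdot}$ is totally symmetric in its derivative indices while being contracted into the "$cd$" slot of $\Helmcomp ab cd I$ together with the "$I$" slot.

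The step I expect to be the main obstacle is precisely this combinatorial reorganization: verifying that the coefficients of the metric jets $\gcj cd J$ in $D_I(\vfc c\gcj efc + 2\vfc c_{,(e}\gcl{f)}c)$, when summed against $\Helmcomp ab ef I$ over $|I|\leq 3$, collapse to the stated five equations with the correct numerical coefficients ($2$, $4$, $6$, $3$, etc.). One should exploit the commutation formula \eqref{eq:totaldercomm} to move the $\wpartial{}{}{}$-type derivatives past $D_j$ and the fact that $\Tsf$ has order exactly $3$ so that $\Helmcomp abcd I = 0$ for $|I|\geq 4$ and $\Helmcomp abcd{ijk}$ is given by the closed formula \eqref{eq:ThirdOrderHe}; together with \eqref{eq:GaugeHelmCond} and \eqref{eq:HelmVanish} this is enough. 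Alternatively one can derive the same identities by applying $\pr\liftedvf{\vfc{}}$ to the translational relation \eqref{eq:TransHelmCond} and again using \eqref{eq:GaugeHelmCond}, paralleling the argument in the proof of Proposition \ref{prop:symS1S2consC1} but retaining, rather than discarding, the terms of each weight in $\vfc{}$; this route makes the cancellations more transparent. Once the five identities are in hand the proof is complete.
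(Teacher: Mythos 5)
Your proposal is correct and takes essentially the same route as the paper: the paper obtains $\Helm(\liftedvf{\vfc{},\text{ev}})=0$ in a single step from the Lie derivative formula \eqref{eq:Liederformula} (naturality kills $\cL_{\pr\liftedvf{\vfc{}}}\Tsf$ and divergence-freeness kills the Euler--Lagrange term), whereas you reach the same identity by routing through the proof of Proposition \ref{prop:symS1S2consC1}, which is only a cosmetic difference. From there, exactly as you outline, the paper Leibniz-expands $(D_I\liftedvf{\vfc{},\text{ev},cd})\Helmcomp abcdI=0$ and equates the coefficients of the independent derivatives $\vfc e_{,I}$, $|I|=0,\dots,4$, using $\Helmcomp abcdI=0$ for $|I|\geq4$, to read off \eqref{eq:HelmConsCondO0}--\eqref{eq:HelmConsCondO4}.
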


\begin{proof} By the Lie derivative formula \eqref{eq:Liederformula}, 
\begin{equation}\label{eq:DiffHelmConds}
\Helm(\liftedvf{\vfc{},}{}_{\text{ev}})=(D_I\liftedvf{\vfc{},}{}_{\text{ev,}cd})\Helmcomp abcdI d\gc ab\wedge\dvol=0,
\end{equation}
for all $\liftedvf{\vfc{}}\in\difflift$, where
\[
\liftedvf{\vfc{},}{}_{\text{ev}} = \liftedvfevc{\vfc{}}cd\wpartialo cd =-(2\vfc e_{,(c}\gcl{d)}e+\vfc e\gcjl cde)\wpartialo cd.
\]
Let $\mu^{cd,I} = \mu^{(cd),(I)}$, $|I|\geq0$, be multivectors. We compute
\begin{equation}\label{eq:prolongation1}
\begin{split}
D_I(\vfc e_{,c}\gcl de)\mu^{cd,I} 
&= \sum_{|I|,|J|\geq0}\binom{|I|+|J|}{|I|}\vfc e_{,cI}\gcjl deJ\mu^{cd,IJ}\\
&=\sum_{|I|>0}\vfc e_{,I}\sum_{|J|\geq0}\binom{|I|+|J|-1}{|I|-1}\gcjl deJ\mu^{d(i_1,i_2\cdots i_p)J},
\end{split}
\end{equation}
and
\begin{equation}\label{eq:prolongation2}
D_I(\vfc e\gcjl cde)\mu^{cd,I}=\sum_{|I|\geq0}\vfc e_{,I}\sum_{|J|\geq0}\binom{|I|+|J|}{|I|}
\gcjl cd{Je}\mu^{cd,IJ}.
\end{equation}

The derivatives $\vfc e_{,I}$ are independent, so in light of \eqref{eq:prolongation1}, \eqref{eq:prolongation2}, the coefficient of $\vfc e$ in \eqref{eq:DiffHelmConds} yields the equation 
\begin{equation}\label{eq:CharDers0}
\sum_{|J|\geq0}\gcjl cd{Je}\Helmcomp abcdJ=0,
\end{equation} 
while for $|I|>0$, we obtain
\begin{equation}\label{eq:CharDers1}
\sum_{|J|\geq0}\bigg[\binom{|I|+|J|-1}{|I|-1}2\gcjl deJ\Helmcomp abd{(i_1}{i_2\cdots i_p)J}+\binom{|I|+|J|}{|I|}\gcjl cd{Je}\Helmcomp abcd{IJ}\bigg]=0.
\end{equation}
Keeping in mind that $\Helmcomp abcdI=0$ for $|I|\geq4$, equations \eqref{eq:HelmConsCondO} follow from \eqref{eq:CharDers0} and \eqref{eq:CharDers1} by inspection. 
\end{proof}

We note that if $\mu^{abdc}=\mu^{(ab)(cd)}$ is a valence 4 tensor satisfying the cyclic identity $\mu^{a(bcd)}=0$, then 
\begin{equation}\label{eq:SymmPair}
\mu^{abcd}-\mu^{cdab}= \dfrac32(\mu^{a(bcd)}+\mu^{b(acd)}-\mu^{c(abd)}-\mu^{d(abc)})=0,
\end{equation}
so that $\mu^{abcd}$ is symmetric under the interchange of the pairs of indices $ab$ and $cd$.

\begin{corollary}\label{C:ThirdOrderVanish} Suppose that $\Tsf=\T ab d\gc ab\wedge\dvol$ is a third order natural, divergence free source form. Then $\Tsf$ is locally variational provided that the third order components $\Helmcomp abcd{ijk}$ of the \He operator $\Helm$ associated with $\Tsf$ vanish.
\end{corollary}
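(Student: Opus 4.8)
The plan is to show that the hypothesis $\Helmcomp abcd{ijk}=0$, combined with the identities already in hand, forces \emph{every} remaining component of the \He operator to vanish; since a source form satisfying the \He conditions $\Helm\equiv0$ is locally variational (Section~\ref{S:prelims}), this proves the Corollary. As $\Tsf$ is third order, $\Helmcomp abcdI=0$ automatically for $|I|\geq4$, so only $\Helmcomp abcd{ij}$, $\Helmcomp abcd i$, and $\Helmcompo abcd$ remain, and I would knock them out in that order, i.e.\ by downward induction on $|I|$. The common mechanism is this: once the terms of the diffeomorphism-invariance identities \eqref{eq:HelmConsCondO} that carry already-killed, higher-length \He components are discarded, each of \eqref{eq:HelmConsCondO1}, \eqref{eq:HelmConsCondO2}, \eqref{eq:HelmConsCondO3} collapses to the vanishing of an expression of the form $\gc ce\cdot(\text{lower \He component})$; and since $\gc ce$ is a free, everywhere-invertible jet coordinate, it may be contracted away with the inverse metric, leaving a purely algebraic symmetry condition on that component.

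Concretely, for the length-two component one uses in addition the integrability conditions of Proposition~\ref{P:HelmConsCondO}'s companion, Proposition~\ref{P:HelmIntCond}: with $\Helmcomp abcd{ijk}=0$, \eqref{eq:HelmIntCond2} becomes $\Helmcomp abcd{ij}=-\Helmcomp cdab{ij}$. On the other hand, in \eqref{eq:HelmConsCondO3} the last two terms carry length-three \He components and so drop out, leaving $2\gc ce\,\Helmcomp abc{(d}{ij)}=0$, hence $\Helmcomp abc{(d}{ij)}=0$. Applying the algebraic identity \eqref{eq:SymmPair} to the valence-four tensor $\mu^{cd,ij}:=\Helmcomp abcd{ij}$ (with $a$, $b$ held fixed), this last relation promotes to the statement that $\Helmcomp abcd{ij}$ is \emph{invariant} under interchanging the pair $(cd)$ with the pair $(ij)$. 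Thus $\Helmcomp abcd{ij}$, viewed as a function of its three symmetric index pairs $(ab)$, $(cd)$, $(ij)$, changes sign under one transposition of those pairs and is fixed by another; since any two transpositions are conjugate in $S_3$ (equivalently, their product is a $3$-cycle acting by $-1$), this is possible only if $\Helmcomp abcd{ij}=0$. The remaining steps need no further input. With $\Helmcomp abcd{ij}=0$ (and $\Helmcomp abcd{ijk}=0$) the identity \eqref{eq:HelmConsCondO2} collapses to $\gc ce\,\Helmcomp abc{(d}{i)}=0$, hence $\Helmcomp abc{(d}{i)}=0$; together with the symmetry of $\Helmcomp abcd i$ in $(cd)$ this exhibits $\Helmcomp abcd i$ as symmetric under $c\leftrightarrow d$ and antisymmetric under $d\leftrightarrow i$, the same $S_3$ conflict now on the three indices $c$, $d$, $i$, forcing $\Helmcomp abcd i=0$. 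Finally, with $\Helmcomp abcd i=\Helmcomp abcd{ij}=\Helmcomp abcd{ijk}=0$, equation \eqref{eq:HelmConsCondO1} reduces to $\gc ce\,\Helmcompo abcf=0$, whence $\Helmcompo abcf=0$. Therefore $\Helm\equiv0$.

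The two observations that carry the argument — that the diffeomorphism-invariance identities \eqref{eq:HelmConsCondO} contain a removable invertible metric factor once the established vanishings are substituted, and that \eqref{eq:SymmPair} converts the coexistence of a pair-interchange symmetry and a pair-interchange antisymmetry into vanishing via the conjugacy of transpositions in $S_3$ — are the only points of real substance. I expect the remaining care to be purely a matter of bookkeeping: determining, at each of the three stages, exactly which terms of \eqref{eq:HelmConsCondO} and which of the conditions \eqref{eq:HelmIntCond} are still in play, and checking that the surviving contraction really does peel off as claimed.
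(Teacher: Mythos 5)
Your proposal is correct and follows essentially the same route as the paper's proof: eliminate $\Helmcomp abcd{ij}$ by combining the skew-symmetry from \eqref{eq:HelmIntCond2} with the cyclic vanishing $\Helmcomp abc{(d}{ij)}=0$ extracted from \eqref{eq:HelmConsCondO3} via \eqref{eq:SymmPair}, then kill $\Helmcomp abcd{i}$ through the overlapping symmetric/antisymmetric index conflict coming from \eqref{eq:HelmConsCondO2}, and finally read off $\Helmcompo abcd=0$ from \eqref{eq:HelmConsCondO1}. Your explicit remark that the invertible factor $\gc ce$ must be contracted away is exactly the step the paper leaves implicit, so no gap remains.
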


\begin{proof} We shall show that as a consequence of the assumptions, all the components of the \He operator $\Helm$ vanish identically. The proof is based on identities \eqref{eq:HelmIntCond} and \eqref{eq:HelmConsCondO} of Propositions \ref{P:HelmIntCond} and \ref{P:HelmConsCondO}, respectively.

We first observe that by \eqref{eq:HelmIntCond2}, the second order components $\Helmcomp abcd{ij}$ are skew-symmetric under the interchange of the index pairs $ab$ and $cd$. Due to the assumptions, equation \eqref{eq:HelmConsCondO3} reduces to $\Helmcomp abc{(d}{ij)}=0$. Consequently, identity \eqref{eq:SymmPair} applied to the index pairs $cd$ and $ij$ shows that the $\Helmcomp abcd{ij}$ are also symmetric under the interchange of these pairs. Thus the second order components $\Helmcomp abcd{ij}$ must vanish identically.

By the above, equation \eqref{eq:HelmConsCondO2} now implies that $\Helmcomp abc{(d}{i)}=0$. Thus $\Helmcomp abcdi$ is symmetric and skew-symmetric in overlapping pairs of indices and hence must vanish. Finally, equation \eqref{eq:HelmConsCondO1} presently shows that $\Helmcompo abcd=0$ for all $a$, $b$, $c$, $d$.
\end{proof}

 We will use a semi-colon to indicate differentiation with respect to the weighted partial derivative operators \eqref{eq:wpd}, so that, for example, 
\[
\wpartial ab{ij}\wpartial cd{klm}F=F^{;ab,ij;cd,klm},
\]
for a differential function $F$  on $\J\infty\Gbundle$.

We will use the following result in the course of the proof of Theorem \ref{th:main1}.
\begin{proposition}\label{p:HelmCompExpression}
 Let $\Tsf=\T abd\gc ab\wedge\dvol$ be a third order divergence-free source form and let $\Helmcomp abcd{i_1i_2}$ denote the second order components of the \He operator associated with $\Tsf$. Then
\begin{align}
(i)&\qquad\Tm a{b}{{c_1}{c_2,}{i_1i_2(i_3|};d_1d_2,|j_1j_2j_3)}=\Tm {i_1}{i_2}{{c_1}{c_2,}{ab(i_3|};d_1d_2,|j_1j_2j_3)};\label{eq:TIdentity1}\\
(ii)&\qquad\Tm {(a}{b|}{c_1c_2,|i_1)i_2i_3} = \frac13\Tm {i_2}{i_3}{c_1c_2,abi_1};\hskip1.5in\label{eq:TIdentity2}\\
(iii)&\qquad\Helmcompm ab{c_1}{c_2}{i_1i_2}{d_1d_2,j_1j_2j_3j_4}=3\Tm{i_1}{i_2}{ab,{c_1}{c_2}(j_1|;d_1d_2,|j_2j_3j_4)}\label{eq:TIdentity3},
\end{align}
where indices enclosed within vertical bars are omitted in the symmetrization. 
\end{proposition}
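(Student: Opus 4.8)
The plan is to obtain (ii) from the divergence constraint by a short linear-algebra argument, to deduce (i) from (ii) together with a once-differentiated form of the constraint, and to reduce (iii) to (i) via the explicit formula \eqref{eq:HelmComp} for the second-order \He components. As a preliminary, recall from the proof of Lemma~\ref{L:polT} that collecting in \eqref{eq:DivergenceCond} the coefficients of the order-four jet variables yields, for every pair $c_1c_2$, the relation $\wpartial{c_1}{c_2}{(i_1i_2i_3}\T{b)a}=0$, the bracket symmetrizing $\{i_1,i_2,i_3,b\}$. Because the operators \eqref{eq:wpd} commute, differentiating this by an arbitrary $\wpartial{d_1}{d_2}{j_1j_2j_3}$ gives $\wpartial{c_1}{c_2}{i_1i_2i_3}\wpartial{d_1}{d_2}{(j_1j_2j_3}\T{b)a}=0$ (bracket over $\{j_1,j_2,j_3,b\}$); moreover, since $\wpartial{d_1}{d_2}{j_1j_2j_3}\T{ab}$ is again symmetric in $a,b$, the relation $\wpartial{c_1}{c_2}{(i_1i_2i_3}\T{b)a}=0$ continues to hold with $\T{ab}$ replaced throughout by $\wpartial{d_1}{d_2}{j_1j_2j_3}\T{ab}$.

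To prove (ii) I would instantiate $\wpartial{c_1}{c_2}{(p_1p_2p_3}\T{q)r}=0$ at the index assignments $(p_1p_2p_3;q;r)=(i_1i_2i_3;a;b)$, $(i_1i_2i_3;b;a)$, $(abi_1;i_2;i_3)$, $(abi_2;i_3;i_1)$, $(abi_3;i_1;i_2)$, and expand each using the symmetry of \eqref{eq:wpd} in its lower indices and of $\T$ in its two indices. Set $M=\wpartial{c_1}{c_2}{i_1i_2i_3}\T{ab}$, put $L_1=\wpartial{c_1}{c_2}{i_1i_2i_3}\T{ab}+\wpartial{c_1}{c_2}{ai_2i_3}\T{i_1b}+\wpartial{c_1}{c_2}{bi_2i_3}\T{i_1a}$ (so that $L_1=3\,\Tm{(a}{b|}{c_1c_2,|i_1)i_2i_3}$) and let $L_2,L_3$ be the expressions obtained from $L_1$ by letting $i_2$, resp.\ $i_3$, play the role of $i_1$; similarly set $R_1=\wpartial{c_1}{c_2}{abi_1}\T{i_2i_3}=3\,\Tm{i_2}{i_3}{c_1c_2,abi_1}$ with analogues $R_2,R_3$. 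The above assignments then produce the relations $L_1+L_2+L_3=M$, $R_1+R_2+L_3=M$, $L_1+R_2+R_3=M$, $L_2+R_1+R_3=M$; subtracting the second from the first and the fourth from the third gives $L_1-R_1=-(L_2-R_2)$ and $L_1-R_1=L_2-R_2$, hence $L_1=R_1$, which is (ii). (These relations are simply the Garnir-type identities of the $GL(\bdim)$-component $S^{(3,2)}\subset\Sym^3(\R\bdim)\otimes\Sym^2(\R\bdim)$ into which the divergence constraint places the tensor $\wpartial{c_1}{c_2}{i_1i_2i_3}\T{ab}$, but the direct verification above is quickest.)

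For (i), the preliminary step shows that (ii) holds with a spectator operator $\wpartial{d_1}{d_2}{j_1j_2j_3}$ attached; relabelling the slot-one derivative indices and using the symmetry of $\T$ turns this identity into
\[
\wpartial{c_1}{c_2}{i_1i_2i_3}\wpartial{d_1}{d_2}{j_1j_2j_3}\T{ab}
=\wpartial{c_1}{c_2}{abi_3}\wpartial{d_1}{d_2}{j_1j_2j_3}\T{i_1i_2}
-\wpartial{c_1}{c_2}{ai_1i_2}\wpartial{d_1}{d_2}{j_1j_2j_3}\T{i_3b}
-\wpartial{c_1}{c_2}{bi_1i_2}\wpartial{d_1}{d_2}{j_1j_2j_3}\T{i_3a}.
\]
Symmetrizing over $\{i_3,j_1,j_2,j_3\}$, the left-hand side becomes the left-hand side of (i) and the first term on the right becomes its right-hand side, while the last two terms vanish: indeed $\wpartial{c_1}{c_2}{ai_1i_2}\wpartial{d_1}{d_2}{(j_1j_2j_3}\T{i_3)b}=0$ (being $\wpartial{d_1}{d_2}{(j_1j_2j_3}\T{i_3)b}=0$ differentiated by $\wpartial{c_1}{c_2}{ai_1i_2}$) is, after using the symmetry of $\wpartial{d_1}{d_2}{}$ in its lower indices, precisely the symmetrization over $\{i_3,j_1,j_2,j_3\}$ of $\wpartial{c_1}{c_2}{ai_1i_2}\wpartial{d_1}{d_2}{j_1j_2j_3}\T{i_3b}$, and similarly for the other term. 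This yields (i).

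For (iii): since $\T$ has order three the higher Euler operator in \eqref{eq:HelmComp} truncates, so that $\Helmcomp ab{c_1}{c_2}{i_1i_2}=\wpartial{c_1}{c_2}{i_1i_2}\T{ab}-\wpartial ab{i_1i_2}\T{c_1c_2}+3\,D_k(\wpartial ab{i_1i_2k}\T{c_1c_2})$. Applying the fourth-order operator $\wpartial{d_1}{d_2}{j_1j_2j_3j_4}$ kills the first two terms, which are of order at most three, while for the third term the commutation formula \eqref{eq:totaldercomm} together with $\wpartial{d_1}{d_2}{j_1j_2j_3j_4}(\wpartial ab{i_1i_2k}\T{c_1c_2})=0$ produces $3\,\wpartial{d_1}{d_2}{(j_1j_2j_3}\wpartial ab{i_1i_2|j_4)}\T{c_1c_2}$. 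Identity (i), applied with the relabelling that interchanges the index pairs $ab$ and $c_1c_2$, rewrites this as $3\,\wpartial ab{c_1c_2(j_1|}\wpartial{d_1}{d_2}{|j_2j_3j_4)}\T{i_1i_2}=3\,\Tm{i_1}{i_2}{ab,c_1c_2(j_1|;d_1d_2,|j_2j_3j_4)}$, which is (iii). The one place I expect to have to be careful is (ii): keeping the symmetrizations and the three reshufflings $L_k,R_k$ correctly organized so that the divergence relations collapse to $L_1=R_1$ with the stated coefficient; everything after that follows mechanically from \eqref{eq:totaldercomm}.
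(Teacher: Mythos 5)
Your argument is correct, and for part (iii) it coincides with the paper's: truncate \eqref{eq:HelmComp} using the third-order assumption, apply $\wpartial{d_1}{d_2}{j_1j_2j_3j_4}$, use the commutation rule \eqref{eq:totaldercomm} to pick up the symmetrized term, and convert it with (i). Where you differ is in (i)--(ii). The paper proves (i) first: from \eqref{eq:DivFreeThirdOrder} and its consequence \eqref{eq:norsu} it shows that the doubly differentiated expression vanishes when symmetrized over $\{b,i_1,i_2\}$, and the pair-interchange identity \eqref{eq:SymmPair} then delivers \eqref{eq:TIdentity1}; identity \eqref{eq:TIdentity2} is afterwards obtained by a separate short computation from \eqref{eq:DivFreeThirdOrder} and \eqref{eq:norsu}. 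You reverse the logical order: (ii) is extracted from four linear relations among your $M$, $L_k$, $R_k$ coming from instances of $\wpartial {c_1}{c_2}{(i_1i_2i_3}\T {b)}{a}=0$, and (i) is then deduced by attaching the spectator operator $\wpartial{d_1}{d_2}{j_1j_2j_3}$ (legitimate, since the constraint survives differentiation), solving for $\wpartial{c_1}{c_2}{i_1i_2i_3}\wpartial{d_1}{d_2}{j_1j_2j_3}\T ab$, and symmetrizing over $\{i_3,j_1,j_2,j_3\}$, the two unwanted terms dying by the differentiated constraint. This bypasses \eqref{eq:SymmPair} entirely; both routes rest only on the first-derivative consequence \eqref{eq:DivFreeThirdOrder} of \eqref{eq:DivergenceCond}, so the difference is organizational rather than substantive, but your derivation of (i) from (ii) is a clean alternative to the paper's cyclic-identity argument. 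Two small repairs: the relation $L_1+L_2+L_3=M$ is obtained only by \emph{adding} the two instantiations $(i_1i_2i_3;a;b)$ and $(i_1i_2i_3;b;a)$ (neither alone closes in $M,L_k,R_k$), so this should be said explicitly; and the annotation $R_1=3\,\Tm{i_2}{i_3}{c_1c_2,abi_1}$ carries a spurious factor of $3$ --- in the paper's notation $R_1=\wpartial{c_1}{c_2}{abi_1}\T{i_2}{i_3}=\Tm{i_2}{i_3}{c_1c_2,abi_1}$, and with this correction the conclusion $L_1=R_1$ is precisely \eqref{eq:TIdentity2}.
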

\begin{proof} Since $\Tsf$ is divergence free and of order 3, equation \eqref{eq:DivergenceCond} yields
\begin{equation}\label{eq:DivFreeThirdOrder}
\Tm a{(b|}{{c_1}{c_2}{|i_1i_2i_3)}}=0.
\end{equation}
Consequently,
\begin{equation}\label{eq:norsu}
\Tm a{(b|}{{c_1}{c_2}{|i_1i_2)i_3}}=-\dfrac13 \Tm a{i_3}{{c_1}{c_2},{bi_1i_2}},
\end{equation}
which, on account of \eqref{eq:DivFreeThirdOrder}, implies that
\begin{equation*}
\begin{split}
\Tm a{(b|}{{c_1}{c_2}{|i_1i_2)(i_3|};d_1d_2,|j_1j_2j_3)}&=-\dfrac13 \Tm a{(i_3|}{{c_1}{c_2},{bi_1i_2};d_1d_2,|j_1j_2j_3)}\\
&=-\dfrac13 \wpartial{c_1}{c_2}{bi_1i_2}\Tm a{(i_3|}{d_1d_2,|j_1j_2j_3)}=0.
\end{split}
\end{equation*}
Equation \eqref{eq:SymmPair} now yields \eqref{eq:TIdentity1}.

Next, due to \eqref{eq:DivFreeThirdOrder}, we have that
\begin{equation}
\Tm {(a}{b|}{c_1c_2,|i_1)i_2i_3}
=-\underset{\{abi_1\}}{\Sym}\;\Tm {(i_2|}{a}{c_1c_2,bi_1|i_3)}
=\dfrac13\Tm{i_2}{i_3}{c_1c_2,ab{i_1}},
\end{equation}
where in the second step we used equation \eqref{eq:norsu}.

Finally, in light of the commutation formula \eqref{eq:totaldercomm},
\begin{equation*}
\begin{split}
\Helmcompm ab{c_1}{c_2}{i_1i_2}{d_1d_2,j_1j_2j_3j_4}
&=\wpartial{d_1}{d_2}{j_1j_2j_3j_4}\big(\wpartial {c_1}{c_2}{i_1i_2}\T ab-\wpartial ab{i_1i_2}\T {c_1}{c_2}+3D_{i_3}(\wpartial ab{i_1i_2i_3}\T {c_1}{c_2})\big)\\
&= 3{\Tm {c_1}{c_2}{ab,i_1i_2(j_1|;d_1d_2|j_2j_3j_4)}}=3{\Tm {i_1}{i_2}{ab,{c_1}{c_2}(j_1|;d_1d_2|j_2j_3j_4)}},
\end{split}
\end{equation*}
which completes the proof of the Proposition.
\end{proof}
 
\begin{proof}[Proof of Theorem~\protect\ref{th:main1}]

Expression \eqref{eq:InverseProblem} is a straightforward consequence of Theorem \ref{T:cohomology} for locally variational source forms. Thus it suffices, by  Corollary~\ref{C:ThirdOrderVanish}, to show that the third order components $\Helmcomp abcd{ijk}$ of the \He operator $\Helm$ vanish. 

In order to streamline our notation, we let $\pmi=(p_1,p_2)$ and $\qmi=(q_1,q_2,q_3)$ denote multi-indices of integers $1\leq p_t\leq m$, $1\leq q_t\leq m$ of respective lengths 2 and 3, and write    
$\wpartialm\pmi\qmi=\wpartial{p_1}{p_2}{q_1q_2q_3}$ in what follows. 
We also employ the notation
\[
F^{;\pmi_1\qmi_1;\pmi_2\qmi_2;\cdots;\pmi_s\qmi_s}=\wpartialm{\pmi_1}{\qmi_1}\wpartialm{\pmi_2}{\qmi_2}\cdots\wpartialm{\pmi_s}{\qmi_s}F
\]
for the repeated derivatives of $F$ with respect to the weighted partial derivative operators $\wpartialm{\pmi_u}{\qmi_u}=\wpartial{p_{u1}}{p_{u2}}{q_{u1}q_{u2}q_{u3}}$.

By Lemma~\ref{L:polT}, the $\bdim$-fold derivatives $\Tstand a{b;\pmi_1\qmi_1;\pmi_2\qmi_2;\cdots;\pmi_m\qmi_m}=0$, and consequently, the third order components $\Helmcomp abcd{ijk}=\wpartial cd{ijk}\T ab+\wpartial ab{ijk}\T cd$ of the Helmholtz operator satisfy
\[
\Helmcompm abcd{ijk}{\pmi_1\qmi_1;\pmi_2\qmi_2;\cdots;\pmi_{m-1}\qmi_{m-1}}=0.
\]
We will prove by induction that $\Helmcompm abcd{ijk}{\pmi_1\qmi_1;\pmi_2\qmi_2;\cdots;\pmi_{r}\qmi_{r}}=0$ for all $0\leq r\leq\bdim-1$.

In order to carry out the induction step we will assume that 
\begin{equation}\label{eq:IndHypo}
\Helmcompm abcd{ijk}{\pmi_1\qmi_1;\pmi_2\qmi_2;\cdots;\pmi_{r+1}\qmi_{r+1}}=0,
\quad\text{ for some $0\leq r\leq m-2$.}
\end{equation}
Our first goal is to show that 
\begin{subequations}\label{eq:IndStep1}
\begin{align}
\Helmcompom abcd{l_1l_2,n_1n_2n_3 n_4;\pmi_1\qmi_1;\pmi_2\qmi_2;\cdots;\pmi_{r}\qmi_{r}}&=0,\label{eq:IndStep10}\\
\Helmcompm abcd{i}{l_1l_2,n_1n_2n_3 n_4;\pmi_1\qmi_1;\pmi_2\qmi_2;\cdots;\pmi_{r}\qmi_{r}}&=0,\label{eq:IndStep11}\\
\Helmcompm abcd{ij}{l_1l_2,n_1n_2n_3 n_4;\pmi_1\qmi_1;\pmi_2\qmi_2;\cdots;\pmi_{r}\qmi_{r}}&=0.\label{eq:IndStep12}
\end{align}
\end{subequations}
For this, we start by applying the operator $\wpartial{l_1}{l_2}{n_1n_2n_3 n_4}$ to \eqref{eq:HelmIntCond2} to see that
\[
\Helmcompm abcd{ij}{l_1l_2,n_1n_2n_3n_4} + \Helmcompm cdab{ij}{l_1l_2,n_1n_2n_3n_4}=
3\Helmcompm abcd{ij(n_1|}{l_1l_2,|n_2n_3 n_4)}.
\]
Thus by the induction assumption \eqref{eq:IndHypo},
\[
\Helmcompm abcd{ij}{l_1l_2,n_1n_2n_3 n_4;\pmi_1\qmi_1;\pmi_2\qmi_2;\cdots;\pmi_{r}\qmi_{r}} + \Helmcompm cdab{ij}{l_1l_2,n_1n_2n_3 n_4;\pmi_1\qmi_1;\pmi_2\qmi_2;\cdots;\pmi_{r}\qmi_{r}}=0,
\]
that is, the components $\Helmcompm abcd{ij}{l_1l_2,n_1n_2n_3n_4}$ are skew-symmetric under the interchange of the index pairs $ab$ and $cd$.
On the other hand, repeated differentiation of \eqref{eq:HelmConsCondO3} yields the equation
\[
\Helmcompm abc{(d}{ij)}{l_1l_2,n_1n_2n_3 n_4;\pmi_1\qmi_1;\pmi_2\qmi_2;\cdots;\pmi_{r}\qmi_{r}}=0.
\]
Thus by \eqref{eq:SymmPair}, the components $\Helmcompm abcd{ij}{l_1l_2,n_1n_2n_3n_4}$ are symmetric in the pairs $cd$ and $ij$. It now follows that \eqref{eq:IndStep12} holds.

By differentiating \eqref{eq:HelmConsCondO2} we see that
\[
\Helmcompm abc{(d}{i)}{l_1l_2,n_1n_2n_3 n_4;\pmi_1\qmi_1;\pmi_2\qmi_2;\cdots;\pmi_{r}\qmi_{r}}=0,
\]
where we used \eqref{eq:IndStep12} and the identity $\Helmcompm abcd{ijk}{l_1l_2,n_1n_2n_3n_4}=0$. 
Consequently, \eqref{eq:IndStep11} holds. Finally, the first equation \eqref{eq:IndStep10} now follows from \eqref{eq:HelmConsCondO1}, again by differentiation.

Next apply the differential operators $\wpartial{l_1}{l_2}{n_1n_2n_3 n_4}$, $\wpartialm{\pmi_1}{\qmi_1}$,\dots,$\wpartialm{\pmi_r}{\qmi_r}$,
to equation \eqref{eq:HelmConsCondO0}. Keeping in mind that the components $\Helmcomp abcd{ijk}$ are of order 3 we conclude that
\begin{equation}
\begin{split}
\gcj cde&\Helmcompom abcd{l_1l_2,n_1n_2n_3 n_4;\pmi_1\qmi_1;\cdots;\pmi_r\qmi_r}+\gcj cd{ie}\Helmcompm abcdi{l_1l_2,n_1n_2n_3n_4;\pmi_1\qmi_1;\cdots;\pmi_r\qmi_r}\\
&+\gcj cd{ije}\Helmcompm abcd{ij}{l_1l_2,n_1n_2n_3 n_4;\pmi_1\qmi_1;\cdots;\pmi_r\qmi_r}\\
&+\sum_{t=1}^r\delta_e^{(q_{t1}|}\Helmcompm ab{p_{t1}}{p_{t2}}{|q_{t2}q_{t3})}{l_1l_2,n_1n_2n_3 n_4;\pmi_1\qmi_1;\cdots\widehat{\pmi_t\qmi_t}\cdots;\pmi_r\qmi_r}\\&+\delta_{e}^{(n_1|}\Helmcompm ab{l_1}{l_2}{|n_2n_3n_4)}{\pmi_1\qmi_1;\cdots;\pmi_r\qmi_r}=0,
\end{split}
\end{equation}
where the hat indicates omission. On account of \eqref{eq:IndStep1}, the above equation reduces to 
\begin{equation}\label{eq:IndStep13}
\begin{split}
\sum_{t=1}^r\delta_e^{(q_{t1}|}&\Helmcompm ab{p_{t1}}{p_{t2}}{|q_{t2}q_{t3})}{l_1l_2,n_1n_2n_3 n_4;\pmi_1\qmi_1;\cdots\widehat{\pmi_t\qmi_t}\cdots;\pmi_r\qmi_r}\\&\qquad\quad+\delta_{e}^{(n_1|}\Helmcompm ab{l_1}{l_2}{|n_2n_3n_4)}{\pmi_1\qmi_1;\cdots;\pmi_r\qmi_r}=0.
\end{split}
\end{equation}

Our next goal is to show that $\Helmcompm abcd{ef}{{l_1l_2,n_1n_2n_3 n_4;\pmi_1\qmi_1;\cdots;\pmi_{r-1}\qmi_{r-1}}}
$. As a result, equation \eqref{eq:IndStep13} becomes 
\begin{equation}\label{eq:IndStep121}
\delta_{e}^{(n_1|}\Helmcompm ab{l_1}{l_2}{|n_2n_3n_4)}{\pmi_1\qmi_1;\cdots;\pmi_r\qmi_r}=0,
\end{equation} 
which will immediately imply the induction step.

To proceed, we first contract in the index pair $e$, $q_{11}$ in  \eqref{eq:IndStep13}, which yields the equation
\begin{equation}\label{eq:IndStep14}
\begin{split}
\frac{\bdim+2}{3}&\Helmcompm ab{p_{11}}{p_{12}}{q_{12}q_{13}}{l_1l_2,n_1n_2n_3 n_4;\pmi_2\qmi_2;\cdots;\pmi_r\qmi_r}\\
&+\sum_{t=2}^r\Helmcompm ab{p_{t1}}{p_{t2}}{(q_{t1}q_{t2}|}{l_1l_2,n_1n_2n_3 n_4;\pmi_1,|q_{t3})q_{12}q_{13};\pmi_2\qmi_2;\cdots\widehat{\pmi_t\qmi_t}\cdots;\pmi_r\qmi_r}\\
&+\Helmcompm ab{l_1}{l_2}{(n_1n_2n_3|}{\pmi_1,|n_4)q_{12}q_{13};\pmi_2\qmi_2;\cdots;\pmi_r\qmi_r}=0.
\end{split}
\end{equation} 
On account of \eqref{eq:TIdentity3}, we have
\begin{equation*}
\Helmcompm ab{p_{11}}{p_{12}}{q_{12}q_{13}}{l_1l_2,n_1n_2n_3 n_4;\pmi_2\qmi_2;\cdots;\pmi_r\qmi_r}=3\Tm {q_{12}}{q_{13}}{ab,{p_{11}}{p_{12}(n_1|};{l_1l_2,|n_2n_3n_4);\pmi_2\qmi_2;\cdots;\pmi_r\qmi_r}}.
\end{equation*}
We next use \eqref{eq:TIdentity2} and \eqref{eq:TIdentity3} to compute 
\begin{align*}
&\Helmcompm ab{p_{t1}}{p_{t2}}{(q_{t1}q_{t2}|}{l_1l_2,n_1n_2n_3 n_4;\pmi_1,|q_{t3})q_{12}q_{13};\pmi_2\qmi_2;\cdots\widehat{\pmi_t\qmi_t}\cdots;\pmi_r\qmi_r}
\\
&\qquad =3\underset{\{n_1\cdots n_4\}}\Sym\Tm {(q_{t1}}{q_{t2}|}{ab,{p_{t1}}{p_{t2}n_1};{l_1l_2,n_2n_3n_4;\pmi_1,|q_{t3})q_{12}q_{13};\pmi_2\qmi_2;\cdots\widehat{\pmi_t\qmi_t}\cdots;\pmi_r\qmi_r}}\\
&\qquad =3\Tm {(q_{t1}}{q_{t2}|}{\pmi_1,|q_{t3})q_{12}q_{13};ab,p_{t1}{p_{t2}(n_1|};{l_1l_2,|n_2n_3n_4);\pmi_2\qmi_2;\cdots\widehat{\pmi_t\qmi_t}\cdots;\pmi_r\qmi_r}}\\
&\qquad=\Tm {q_{12}}{q_{13}}{\pmi_1,\qmi_t;ab,{p_{t1}}{p_{t2}(n_1|};{l_1l_2,|n_2n_3n_4);\pmi_2\qmi_2;\cdots\widehat{\pmi_t\qmi_t}\cdots;\pmi_r\qmi_r}}.
\end{align*}
Similarly, by \eqref{eq:ThirdOrderHe} and \eqref{eq:TIdentity1},
\begin{align*}
&\Helmcompm ab{l_1}{l_2}{(n_1n_2n_3|}{\pmi_1,|n_4)q_{12}q_{13};\pmi_2\qmi_2;\cdots;\pmi_r\qmi_r}\\
&\qquad=\Tm ab{{l_1}{l_2},{(n_1n_2n_3|};\pmi_1,|n_4)q_{12}q_{13};\pmi_2\qmi_2;\cdots;\pmi_r\qmi_r}+
\Tm{l_1}{l_2}{ab,{(n_1n_2n_3|};\pmi_1,|n_4)q_{12}q_{13};\pmi_2\qmi_2;\cdots;\pmi_r\qmi_r}\\
&\qquad=\Tm {q_{12}}{q_{13}}{{l_1}{l_2},{(n_1n_2n_3|};\pmi_1,|n_4)ab;\pmi_2\qmi_2;\cdots;\pmi_r\qmi_r}+
\Tm{q_{12}}{q_{13}}{ab,{(n_1n_2n_3|};\pmi_1,|n_4){l_1}{l_2};\pmi_2\qmi_2;\cdots;\pmi_r\qmi_r}.
\end{align*}
By the above, equation \eqref{eq:IndStep14} becomes
\begin{equation}\label{eq:IndStep15}
\begin{split}
(\bdim+2)&\Tm {q_{12}}{q_{13}}{ab,{p_{11}}{p_{12}(n_1|};{l_1l_2,|n_2n_3n_4);\pmi_2\qmi_2;\cdots;\pmi_r\qmi_r}}\\
&+
\sum_{t=2}^r\Tm {q_{12}}{q_{13}}{ab,{p_{t1}}{p_{t2}(n_1|};{l_1l_2,|n_2n_3n_4);\pmi_1\qmi_t;\pmi_2\qmi_2;\cdots\widehat{\pmi_t\qmi_t}\cdots;\pmi_r\qmi_r}}\\
&+\Tm {q_{12}}{q_{13}}{{l_1}{l_2},{(n_1n_2n_3|};\pmi_1,|n_4)ab;\pmi_2\qmi_2;\cdots;\pmi_r\qmi_r}\\
&+
\Tm{q_{12}}{q_{13}}{ab,{(n_1n_2n_3|};\pmi_1,|n_4){l_1}{l_2};\pmi_2\qmi_2;\cdots;\pmi_r\qmi_r}=0.
\end{split}
\end{equation}
We have thus reduced \eqref{eq:IndStep14} into an equation for the derivatives of a single component $\T{q_{12}}{q_{13}}$ of the source form $\Tsf$.

Now assume that equations \eqref{eq:IndStep15} admitted a non-trivial solution in the derivatives of $\T{q_{12}}{q_{13}}$. Choose a term $\Tm {q_{12}}{q_{13}}{ab,{p_{11}}{p_{12}(n_1|};{l_1l_2,|n_2n_3n_4);\pmi_2\qmi_2;\cdots;\pmi_r\qmi_r}}$ with maximal absolute value amongst the symmetrized derivatives. Keeping in mind that $r\leq m-2$ by the induction assumption, equation \eqref{eq:IndStep15} implies that
\begin{equation*}
\begin{split}
&|\Tm {q_{12}}{q_{13}}{ab,{p_{11}}{p_{12}(n_1|};{l_1l_2,|n_2n_3n_4);\pmi_2\qmi_2;\cdots;\pmi_r\qmi_r}}|\\
&\leq \frac1{\bdim+2}\bigg(\sum_{t=2}^r|\Tm {q_{12}}{q_{13}}{ab,{p_{t1}}{p_{t2}(n_1|};{l_1l_2,|n_2n_3n_4);\pmi_1\qmi_t;\pmi_2\qmi_2;\cdots\widehat{\pmi_t\qmi_t}\cdots;\pmi_r\qmi_r}}|\\
&+
|\Tm {q_{12}}{q_{13}}{{l_1}{l_2},{(n_1n_2n_3|};\pmi_1,|n_4)ab;\pmi_2\qmi_2;\cdots;\pmi_r\qmi_r}|+
|\Tm{q_{12}}{q_{13}}{ab,{(n_1n_2n_3|};\pmi_1,|n_4){l_1}{l_2};\pmi_2\qmi_2;\cdots;\pmi_r\qmi_r}|\bigg)\\
&\leq \frac{\bdim-1}{\bdim+2}|\Tm {q^2_1}{q^3_1}{ab,{p^1_1}{p^2_1(n_1|};{l_1l_2,|n_2n_3n_4);\pmi_2\qmi_2;\cdots;\pmi_r\qmi_r}}|.
\end{split}
\end{equation*}  
Thus necessarily
\begin{equation}
\Helmcompm abcd{ef}{{l_1l_2,n_1n_2n_3 n_4;\pmi_1\qmi_1;\cdots;\pmi_{r-1}\qmi_{r-1}}}=\Tm {e}{f}{ab,{c}{d(n_1|};{l_1l_2,|n_2n_3n_4);\pmi_1\qmi_1;\cdots;\pmi_{r-1}\qmi_{r-1}}}=0,
\end{equation}
for all $a,b,c,d,e,f,l_1,l_2,n_1,\dots,n_4$, $\pmi_1,\qmi_1,\dots, \pmi_{r-1},\qmi_{r-1}$. Hence \eqref{eq:IndStep13} reduces to
\[
\delta_{e}^{(n_1|}\Helmcompm ab{l_1}{l_2}{|n_2n_3n_4)}{\pmi_1\qmi_1;\cdots;\pmi_r\qmi_r}=0,
\]
which immediately implies that 
$\Helmcompm ab{l_1}{l_2}{n_1n_2n_3}{\pmi_1\qmi_1;\cdots;\pmi_r\qmi_r}=0$, completing the induction step.

 In conclusion, the third order components $\Helmcomp abcd{jkl}$ of the \He operator associated with $\Tsf$ vanish identically. By Corollary \ref{C:ThirdOrderVanish} the \He operator $\Helm$ must also vanish identically, and hence $\Tsf$ is locally variational.  
\end{proof}

\section{Discussion}\label{sec:discussion}

In this paper we prove that a system of third order natural, divergence-free differential equations for the components of the metric tensor can always be written as the Euler-Lagrange expression of some Lagrangian function. Moreover, by the solution of the equivariant inverse problem of the calculus of variations for metric field theories presented in \cite{And84}, the Lagrangian can be chosen to be natural when the dimension of the underlying space is $m=0,1,2 \mod 4$, and, in dimensions $m=3\mod4$, the system can be written as a sum of the Euler-Lagrange expression of a natural Lagrangian and a generalized Cotton tensor.

Extending our result to fourth order operators either by showing that natural, divergence free systems are necessarily locally variational or by finding non-variational examples of such systems remains a challenging open problem. 
In contrast, for vector field theories \cite{AnPo95} and, in more generality, for Yang-Mills theories \cite{MPV08}, non-variational third order source forms with the prescribed symmetries and conservation laws can be derived by the way of natural constructions utilizing the intrinsic geometry of the problem.
Note that unlike in the situation of the present paper, proving that a 4th order source form $\Tsf$ fulfills the highest order non-vacuous \He conditions $\Helmcomp abcd{i_1i_2i_3i_4}=0$ will not be sufficient to guarantee that $\Tsf$ is locally variational.  

A significant generalization of the present work would be to enlarge the symmetry
group to include conformal transformations of the metric with the resultant condition that the source form be trace free. The problem closely bears on conformal gravity, and, as exemplified by the Bach equations \cite{Bach21}, will require the analysis of fourth order systems. A solution for the equivariant inverse problem of the calculus of variations in this situation would also be of substantial independent interest. Additional on physical grounds intriguing problems would be Takens' question for combined field theories involving metric, such as Einstein-Yang-Mills equations in both the abelian and non-abelian cases.

\end{document}